\setlist[enumerate,1]{label={(\roman*)}}
\Crefname{property}{Property}{Properties}
\theoremstyle{plain}
\newtheorem{theorem}{Theorem}[section]
\newtheorem{lemma}[theorem]{Lemma}
\newtheorem{proposition}[theorem]{Proposition}
\newtheorem{corollary}[theorem]{Corollary}
\theoremstyle{definition}
\newtheorem{definition}[theorem]{Definition}
\newcommand{\reals}{\mathbb{R}}
\newcommand{\complexes}{\mathbb{C}}
\newcommand{\naturals}{\mathbb{N}}
\newcommand{\positiveintegers}{\mathbb{N}_{>0}}
\newcommand{\nonnegativereals}{\mathbb{R}_{\ge 0}}
\newcommand{\finitefield}[1]{\mathbb{F}_{#1}}
\newcommand{\distributions}[1][]{\mathcal{P}_{#1}}
\newcommand{\entropy}{H}
\newcommand{\typeclass}[2]{T^{#1}_{#2}}
\newcommand{\graphs}{\mathcal{G}}
\newcommand{\ncgraphs}{\mathcal{G}_{\textnormal{nc}}}
\newcommand{\graphnc}[1]{S_{#1}}
\newcommand{\idealclassical}[1]{\overline{\mathcal{K}_{#1}}}
\newcommand{\idealquantum}[1]{\mathcal{I}_{#1}}
\newcommand{\typegraph}[3]{#1^{\strongproduct #2}[\typeclass{#2}{#3}]}
\renewcommand{\complement}[1]{\overline{#1}}
\newcommand{\strongproduct}{\boxtimes}
\newcommand{\adjacentorequal}{\simeq}
\newcommand{\adjacent}{\sim}
\newcommand{\preorderle}{\preccurlyeq}
\newcommand{\asymptoticle}{\lesssim}
\newcommand{\ket}[1]{\left|#1\right\rangle}
\newcommand{\bra}[1]{\left\langle #1\right|}
\newcommand{\ketbra}[2]{\left|#1\middle\rangle\!\middle\langle#2\right|}
\newcommand{\braket}[2]{\left\langle#1\middle|#2\right\rangle}
\DeclareMathOperator{\boundeds}{\mathcal{B}}
\DeclareMathOperator{\spannedsubspace}{span}
\newcommand{\norm}[2][]{\left\|#2\right\|_{#1}}
\newcommand{\setbuild}[2]{\left\{#1\middle|#2\right\}}
\DeclareMathOperator{\ev}{ev}
\title{Noncommutative extensions of parameters in the asymptotic spectrum of graphs}
\author[1,2]{P\'eter Vrana}
\affil[1]{Department of Geometry, Institute of Mathematics, Budapest University of Technology and Economics, M\H uegyetem rkp. 3., H-1111 Budapest, Hungary.}
\affil[2]{MTA-BME Lend\"ulet Quantum Information Theory Research Group, M\H uegyetem rkp. 3., H-1111 Budapest, Hungary}
\begin{document}
\maketitle

\begin{abstract}
The zero-error capacity of a classical channel is a parameter of its confusability graph, and is equal to the minimum of the values of graph parameters that are additive under the disjoint union, multiplicative under the strong product, monotone under homomorphisms between the complements, and normalized. We show that any such function either has uncountably many extensions to noncommutative graphs with similar properties, or no such extensions at all. More precisely, we find that every extension has an exponent that characterizes its values on the confusability graphs of identity quantum channels, and the set of admissible exponents is either an unbounded subinterval of $[1,\infty)$ or empty. In particular, the set of admissible exponents for the Lov\'asz number, the projective rank, and the fractional Haemers bound over the complex numbers are maximal, while the fractional clique cover number does not have any extensions.
\end{abstract}

\section{Introduction}

The Shannon (zero-error) capacity \cite{shannon1956zero} and several quantum versions thereof have recently been shown to have dual characterizations as the minimum over parameters of graphs (or noncommutative graphs) which are normalized, multiplicative under the strong (or tensor) product, additive under the disjoint union (direct sum), and monotone with respect to appropriately chosen cohomomorphism preorders \cite{zuiddam2019asymptotic,li2020quantum}. With these operations, isomorphism classes of graphs and noncommutative graphs form the semirings $\graphs$ and $\ncgraphs$, and the inclusion $\graphs\to\ncgraphs$ is a semiring-homomorphism. On $\ncgraphs$ (and, by restriction, on $\graphs$) one introduces the cohomomorphism preorder $\le$ and the entanglement-assisted cohomomorphism preorder $\le_*$. The different notions of one-shot zero-error capacities can be characterized in terms of these relations (e.g. the independence number of a graph $G$ is the largest $d$ such that $\complement{K_d}\le G$), and the corresponding (asymptotic) capacities can be similarly understood in terms of \emph{asymptotic preorders} associated with $\le$ and $\le_*$.

The aforementioned dual characterizations are proved using the theory of asymptotic spectra developed by Strassen in the context of computational complexity of bilinear maps \cite{strassen1988asymptotic}. A central concept here is the \emph{asymptotic spectrum} of a preordered semiring $(S,\preorderle)$, the set of $\preorderle$-monotone semiring-homomorphisms from $S$ to $\nonnegativereals$ (with the usual operations and total order), denoted by $\Delta(S,\preorderle)$. Under some conditions on the preorder $\preorderle$ (see \cref{sec:preliminaries} for the precise statements), the asymptotic preorder $\asymptoticle$ is characterized as $x\asymptoticle y$ iff $\forall f\in\Delta(S,\preorderle):f(x)\le f(y)$. The same conditions imply that the asymptotic spectrum is nonempty and the restriction map to the asymptotic spectrum of any subsemiring is surjective.

In \cite{zuiddam2019asymptotic} Zuiddam showed that these conditions are satisfied by $(\graphs,\le)$, and in \cite{li2020quantum} Li and Zuiddam showed that they are also satisfied by $(\graphs,\le_*)$ and $(\ncgraphs,\le_*)$ but not by $(\ncgraphs,\le)$. For this reason, the results of Strassen's theory (the duality theorem, nonempty spectrum, extension property) are not directly available for the study of unassisted (classical or quantum) capacity of quantum channels, which raises the following questions: Is $\Delta(\ncgraphs,\le)$ nonempty? Is the restriction map $\Delta(\ncgraphs,\le)\to\Delta(\graphs,\le)$ surjective? Does $\Delta(\ncgraphs,\le)$ characterize the asymptotic preorder on $\ncgraphs$ (and therefore also the classical and quantum unassisted capacities)?

The first question is known to have a positive answer since $\Delta(\ncgraphs,\le_*)\subseteq\Delta(\ncgraphs,\le)$ is nonempty (in fact, an explicit element is known: a quantum version of the Lov\'asz number \cite{duan2012zero}). Focusing on the second question, in this paper, we study the restriction map from the asymptotic spectrum of noncommutative graphs with respect to the (unassisted) cohomomorphism preorder to the asymptotic spectrum of graphs. Known elements of $\Delta(\graphs,\le)$ include the fractional clique cover number \cite{shannon1956zero}, the Lov\'asz number \cite{lovasz1979shannon}, the projective rank \cite{mancinska2016quantum}, and the fractional Haemers bounds \cite{blasiak2013graph,bukh2018fractional}, therefore we are studying the existence of noncommutative extensions of these and similar parameters.

Formulated in the abstract framework of preordered semirings, Strassen's results and their recent generalizations are fundamentally non-constructive, therefore we cannot use them to find explicit extensions of the graph parameters. However, they are useful for proving that (multiplicative, additive, and monotone) extensions of certain parameters exist and, as we will see, provide information about the possible extensions.

\paragraph{Results.}
None of the known extension theorems apply to the inclusion of $\graphs$ into $\ncgraphs$ with the unassisted cohomomorphism preorder. To overcome this difficulty, we introduce an intermediate semiring $\mathcal{A}$ that contains the confusability graphs of \emph{all} classical channels and the confusability graphs of \emph{noiseless} quantum channels. It follows from the construction of $\mathcal{A}$ that every element in the asymptotic spectrum of $\mathcal{A}$ is the restriction of at least one element of the asymptotic spectrum of noncommutative graphs (with respect to unassisted cohomomorphisms), and that every element in the asymptotic spectrum of $\mathcal{A}$ is uniquely specified by a pair $(f,\alpha)$ where $f$ is in the asymptotic spectrum of graphs and $\alpha\in[1,\infty)$. $\alpha$ is equal to the logarithm of the value of the spectral point on the confusability graph of the noiseless qubit channel, therefore we will call it the \emph{exponent} of the extension. We prove the following about $\Delta(\mathcal{A},\le)$, the asymptotic spectrum of $\mathcal{A}$, considered as a subset of $\Delta(\graphs,\le)\times[1,\infty)$:
\begin{itemize}
\item if $\alpha$ is an admissible exponent for some $f\in\Delta(\graphs,\le)$ (i.e. $(f,\alpha)\in\Delta(\mathcal{A},\le)$) and $\alpha\le\beta$, then $\beta$ is also admissible (\cref{thm:exponentsupperset});
\item $\Delta(\mathcal{A},\le)$ is log-convex (\cref{thm:Aspectrumlogconvex}).
\end{itemize}

For specific graph parameters we find the following conditions concerning the existence of (multiplicative, additive, monotone) extensions:
\begin{itemize}
\item the fractional clique cover number $\complement{\chi}_f$ has no such extensions (\cref{prop:fccnoextensions});
\item the Lov\'asz number $\vartheta$, the fractional Haemers bound over the complex numbers $\mathcal{H}^\complexes_f$, and the complement of the projective rank $\complement{\xi_f}$ each have extensions with every exponent in $[1,\infty)$ (\cref{prop:thetaoneadmissible,prop:fHaemersConeadmissible,prop:complementprojectiverankoneadmissible});
\item for sufficiently large primes $p$ such that there exists a Hadamard matrix of size $4p$, the set of admissible exponents is a proper subset of $[1,\infty)$ (\cref{prop:fHaemersFpbound}).
\end{itemize}

\paragraph{Organization of this paper.}
In \cref{sec:preliminaries} we recall some basic definitions and introduce notations related to graphs and noncommutative graphs in zero-error information theory, and concepts from Strassen's theory of asymptotic spectra, formulated in terms of preordered semirings. In \cref{sec:extensions} we describe an intermediate semiring between the semiring of graphs and the semiring of noncommutative graphs, such that any element of the spectrum of the intermediate semiring admits at least on extension to noncommutative graphs. We study the relation between the asymptotic spectrum of graphs and that of the intermediate semiring and find that the existence of a single extension of any given element of the asymptotic spectrum of graphs implies the existence of an infinite family of extensions. This abstract result is followed by a case-by-case study of the extensions of some known elements of the asymptotic spectrum of graphs. In \cref{sec:convexity} we prove that the asymptotic spectrum of the intermediate semiring is log-convex, which is especially useful for studying the extensions of (suitably defined) convex combinations of elements in the asymptotic spectrum of graphs.

\section{Preliminaries}\label{sec:preliminaries}

\subsection{Graphs and noncommutative graphs.}

By a graph we will mean a finite simple undirected graph. The vertex and edge sets of a graph $G$ will be denoted by $V(G)$ and $E(G)$, and we will write $g\adjacent g'$ if two vertices $g,g'$ are adjacent, and $g\adjacentorequal g'$ if they are adjacent or equal. For example, the \emph{strong product} $G\strongproduct H$ of two graphs has vertex set $V(G)\times V(H)$, and $(g,h)\adjacentorequal(g',h')$ iff $g\adjacentorequal g'$ and $h\adjacentorequal h'$. The \emph{disjoint union} of $G$ and $H$ is the graph $G\sqcup H$ with $V(G\sqcup H)=V(G)\sqcup V(H)$ and $E(G\sqcup H)=E(G)\sqcup E(H)$. The \emph{complete graph} with vertex set $[d]=\{1,2,\ldots,d\}$ is denoted by $K_d$. The complement of $G$ will be denoted by $\complement{G}$. A \emph{homomorphism} $\varphi:H\to G$ is a map between the vertex sets such that adjacent vertices are mapped to adjacent ones, while a \emph{cohomomorphism} is a homomorphism between the complements. We will write $H\le G$ if there exists a cohomomorphism from $H$ to $G$ (i.e. a homomorphism from $\complement{H}$ to $\complement{G}$).

Recall that the \emph{confusability graph} of a classical (or classical-quantum) channel is defined as follows. The vertex set is the set of input symbols, and two distinct vertices form an edge if the corresponding output distributions (or states) do not have disjoint supports. When two independent channels are used in parallel, the transition probabilities are given by the tensor product, and the confusability graph of the product of channels is the strong graph product of the confusability graphs of the individual channels. This implies that the zero-error classical capacity of a channel is a function of its confusability graph, 

A \emph{noncommutative graph} is a subspace $S\subseteq\boundeds(\mathcal{H})$ for a Hilbert-space $\mathcal{H}$ (which we always assume to be finite dimensional) such that $I\in S$ and $S^*=S$ \cite{duan2012zero}. A quantum channel $N:\boundeds(\mathcal{H})\to\boundeds(\mathcal{K})$, with Kraus representation
\begin{equation}
N(\rho)=\sum_{i\in I}E_i\rho E_i^*
\end{equation}
determines the noncommutative graph $\spannedsubspace\setbuild{E_i^*E_j}{i,j\in I}$ --- the \emph{confusability graph} of $N$ --- which, similarly to the classical case, determines its zero-error capacities. Every noncommutative graph arises in this way for a suitable channel \cite{duan2009super,cubitt2011superactivation}. Parallel use of channels with confusability graphs $S\subseteq\boundeds(\mathcal{H})$ and $T\subseteq\boundeds(\mathcal{K})$ corresponds to the \emph{tensor product} of the operator systems $S\otimes T=\spannedsubspace\setbuild{A\otimes B}{A\in S,B\in T}\subseteq\boundeds(\mathcal{H}\otimes\mathcal{K})$. The analogue of the disjoint union of graphs is the \emph{direct sum} $S\oplus T=\setbuild{A\oplus B}{A\in S,B\in T}\subseteq\boundeds(\mathcal{H}\oplus\mathcal{K})$. The noncommutative graphs $S$ and $T$ are isomorphic if there is a unitary $U:\mathcal{H}\to\mathcal{K}$ such that $USU^*=T$.

A \emph{cohomomorphism} from the noncommutative graph $T$ to $S$ is a collection of linear maps $E_i:\mathcal{K}\to\mathcal{H}$ ($i\in I$ for some finite index set $I$) such that $\sum_{i\in I}E_i^*E_i=I$ and $\forall i,j\in I:E_i^*SE_S\subseteq T$. We will write $T\le S$ when a cohomomorphism exists.

Any classical channel $W$ can be viewed as a quantum channel with Kraus operators of the form $\sqrt{W(y|x)}\ketbra{y}{x}$ for input and output symbols $x$ and $y$. The corresponding noncommutative graph depends only on the confusability graph of the classical channel, therefore any graph $G$ can be viewed as a noncommutative graph, namely
\begin{equation}
\graphnc{G}=\spannedsubspace\setbuild{\ketbra{x}{x'}}{x,x'\in V(G),x\adjacentorequal x'}\subseteq\boundeds(\complexes^{V(G)}).
\end{equation}
This embedding of graphs into noncommutative graphs is compatible with the notions of isomorphism and cohomomorphism, and turns disjoint unions into direct sums and strong products into tensor products.

The noiseless $d$-dimensional quantum channel is the identity map on $\boundeds(\complexes^d)$. We will denote its confusability graph $\complexes I\subseteq\boundeds(\complexes^d)$ by $\idealquantum{d}$. Likewise, the noiseless classical channel with $d$ inputs has confusability graph $\complement{K_d}$. We will use the simplified notation $\idealclassical{d}$ for the corresponding noncommutative graph on $\complexes^d$ (which is the set of diagonal operators in the standard basis). We note that $\idealclassical{1}=\idealquantum{1}$ and that, up to isomorphisms, $\idealclassical{d_1}\oplus\idealclassical{d_2}=\idealclassical{d_1+d_2}$, $\idealclassical{d_1}\otimes\idealclassical{d_2}=\idealclassical{d_1d_2}$, and $\idealquantum{d_1}\otimes\idealquantum{d_2}=\idealquantum{d_1d_2}$.

A zero-error code for a classical channel is the same as an independent set of its confusability graph $G$, i.e. a subset $S\subseteq V(G)$ such that no two vertices in $S$ are adjacent. The independence number $\alpha(G)$ is the maximum cardinality of an independent set. The Shannon capacity of $G$ is the limit $\Theta(G)=\lim_{n\to\infty}\sqrt[n]{\alpha(G^{\strongproduct n})}$. Note that $C_0(G)=\log\Theta(G)$ is often also called the Shannon (or zero-error) capacity. The independence number can be characterized in terms of cohomomorphisms as $\alpha(G)=\max\setbuild{d\in\naturals}{\complement{K}_d\le G}$. The notion of an independent set extends to noncommutative graphs, with the same interpretation (transmitting classical information via a quantum channel without errors), and a similar characterization in terms of cohomomorphisms from $\idealclassical{d}$. In addition, one can consider the quantity $\max\setbuild{d\in\naturals}{\idealquantum{d}\le S}$ and its growth rate on tensor powers, which characterizes the ability of the channel to transmit quantum information without errors.

\subsection{Preordered semirings.}

We will use the following facts from the theory of asymptotic spectra due to Strassen. For more details we refer the reader to \cite{strassen1988asymptotic,zuiddam2018algebraic}.

A (commutative) \emph{preordered semiring} is a set $S$ equipped with binary operations $+$ and $\cdot$ that are commutative and associative, with neutral elements $0$ and $1$, and satisfying the distributive law, together with a reflexive and transitive relation $\preorderle$ that is compatible with the operations in the sense that $a\preorderle b$ implies $a+c\preorderle b+c$ and $ac\preorderle bc$ for all $a,b,c\in S$. There is a unique semiring-homomorphism $i:\naturals\to S$, which we will assume to be an order embedding with respect to the usual total order $\le$ on $\naturals$, i.e. for $n,m\in\naturals$ we have $i(n)\preorderle i(m)$ iff $n\le m$.

Let $S$ and $T$ be preordered semirings with preorders $\preorderle_S$, $\preorderle_T$ (for simplicity, we will use the same notation for the operations and neutral elements). A \emph{monotone semiring homomorphism} from $S$ to $T$ is a map $f:S\to T$ satisfying $f(0)=0$, $f(1)=1$, $f(a+b)=f(a)+f(b)$, $f(a\cdot b)=f(a)f(b)$ and $a\preorderle_S b\implies f(a)\preorderle_T f(b)$. The \emph{asymptotic spectrum} $\Delta(S,\preorderle)$ is the set of monotone semiring homomorphisms $f:S\to\nonnegativereals$, and its elements are also called spectral points.

An element $u\in S$ is \emph{power universal} \cite{fritz2020generalization} if for all $s\in S\setminus\{0\}$ there exists $k\in\naturals$ such that $1\preorderle u^ks$ and $s\preorderle u^k$. If such an element exists, we say that $S$ is of \emph{polynomial growth}. If $u=2$ is power universal, then $\preorderle$ is called a \emph{Strassen preorder} \cite{strassen1988asymptotic,zuiddam2018algebraic}.

A preordered semiring of polynomial growth can be equipped with a relaxed preorder, the \emph{asymptotic preorder} \cite{strassen1988asymptotic,vrana2021generalization}, defined as $a\asymptoticle b$ if there is a sublinear sequence $(k_n)_{n\in\naturals}$ of natural numbers such that for all $n$ the inequality $a^n\preorderle u^{k_n}b^n$ holds. The asymptotic spectrum provides the following dual characterization of the asymptotic preorder of a semiring with a Strassen preorder.
\begin{theorem}[{\cite[Corollary 2.6]{strassen1988asymptotic}}, {\cite[Theorem 2.12]{zuiddam2018algebraic}}]\label{thm:Strassen}
Let $S$ be a semiring with a Strassen preorder $\preorderle$, and let $a,b\in S$. Then $a\asymptoticle b$ iff $\forall f\in\Delta(S,\preorderle):f(a)\le f(b)$. Moreover, $\Delta(S,\preorderle)$ is nonempty.
\end{theorem}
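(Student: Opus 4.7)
The plan is to treat the three assertions in sequence: the easy direction, the hard direction, and nonemptiness, with the last two sharing a common Positivstellensatz-style argument. For the easy direction, suppose $a \asymptoticle b$ is witnessed by a sublinear sequence $(k_n)$ with $a^n \preorderle u^{k_n} b^n$. Applying any $f \in \Delta(S, \preorderle)$ and using that $f$ is a monotone semiring-homomorphism yields $f(a)^n \le f(u)^{k_n} f(b)^n$. Since $u$ is power universal one has $1 \preorderle u$ and hence $f(u) \ge 1$; extracting $n$-th roots gives $f(a) \le f(u)^{k_n/n} f(b)$, so letting $n \to \infty$ with $k_n/n \to 0$ yields $f(a) \le f(b)$.

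For the hard direction I would argue contrapositively: assuming $a \not\asymptoticle b$, I want to produce $f \in \Delta(S, \preorderle)$ with $f(a) > f(b)$. The construction is a Kadison--Dubois-style separation executed via Zorn's lemma. One considers the family of Strassen preorders on $S$ that contain $\preorderle$ and additionally contain a relation of the form $u^{\lceil \epsilon n \rceil} b^n \preorderle a^n$ for suitable $\epsilon > 0$ and $n$; this family is nonempty precisely because $a \not\asymptoticle b$ prevents the adjoined relation from clashing with sublinear-exponent consequences of the original axioms. A maximal member in this family, furnished by Zorn, then has the key property that the equivalence $x \sim y$ iff $x \preorderle y \preorderle x$ descends to a totally ordered cancellative semiring on $S/\!\sim$; Archimedean-ness of the resulting order, namely that every nonzero element is sandwiched between two powers of $2$, is again exactly the Strassen hypothesis. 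A H\"older-type theorem embeds such a semiring order- and semiring-preservingly into $\nonnegativereals$, and the composition $S \to S/\!\sim \to \nonnegativereals$ is the sought spectral point, with $f(a) > f(b)$ enforced by the adjoined strict inequality.

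The main obstacle will be the passage from the maximal preorder to the totally ordered Archimedean quotient: one needs to verify (a) that maximality forces every pair in $S/\!\sim$ to be comparable, which relies on the observation that for any incomparable $x, y$ at least one of the further enlargements $x \preorderle y$ or $y \preorderle x$ remains consistent---a consistency check that in turn uses power universality of $2$---and (b) that the initial adjunction is itself consistent, which is just the unraveling of $a \not\asymptoticle b$ in terms of derivability of sublinear-exponent inequalities. Nonemptiness of $\Delta(S, \preorderle)$ is then essentially a free corollary: run the same construction with the trivial refinement, or equivalently separate $1$ from $0$, which is possible because the canonical map $\naturals \hookrightarrow S$ is an order embedding and so $1 \not\asymptoticle 0$.
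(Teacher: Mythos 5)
The paper does not prove \cref{thm:Strassen}; it states the result with citations to Strassen and Zuiddam, so there is no in-paper proof to compare against. Judged on its own terms, your outline agrees with the standard approach in those references: the easy direction by applying a spectral point (noting $f(u)\ge 1$ from power universality), and the hard direction together with nonemptiness via Zorn's lemma, maximal Strassen preorders, totality of the quotient, and a H\"older-type embedding into $\nonnegativereals$. Your easy-direction argument is complete and correct.

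The hard direction, however, is an outline with the real work deferred. Three points carry essentially the entire difficulty and you only gesture at them. (a) Consistency of the adjunction: it is not automatic that the preorder generated by $\preorderle$ together with $u^{\lceil\epsilon n\rceil}b^n\preorderle a^n$ remains Strassen; one must characterize the generated preorder explicitly (something like: $x\preorderle' y$ iff there exist $s\in S$, $p\in\naturals$ with $2^p x + s\, a^n \preorderle 2^p y + s\, u^{\lceil\epsilon n\rceil} b^n$) and show that a violation of $k\preorderle' l\Leftrightarrow k\le l$ would unwind into a sublinear chain $a^m\preorderle u^{k_m}b^m$, contradicting $a\not\asymptoticle b$. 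This is a genuine lemma in Strassen/Zuiddam, not an ``unraveling.'' (b) Totality of the maximal preorder via the exchange argument: for incomparable $x,y$, showing that at least one of adjoining $x\preorderle y$ or $y\preorderle x$ stays Strassen is again nontrivial and uses the same calculus of generated preorders. (c) The claim that $S/\!\sim$ is \emph{cancellative} is not justified and is not obviously part of the hypotheses; the standard route defines $f(x)=\sup\{p/q: p\preorderle q\,x\}$ directly on the totally ordered Archimedean quotient and verifies it is a monotone semiring homomorphism, sidestepping cancellativity. As written, the proposal is a correct roadmap to the known proof rather than a proof; to close it you would need to prove the consistency lemma and the exchange lemma, and replace the appeal to cancellativity with the explicit construction of the functional.
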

It is possible to weaken the conditions on the preorder, but it should be noted that the polynomial growth condition in itself is not sufficient to ensure a similar characterization of the asymptotic preorder.

When $S$ is a preordered semiring and $T$ is a subsemiring with inclusion $i:T\to S$, any element $f$ of $\Delta(S,\preorderle)$ restricts to an element of $\Delta(T,\preorderle)$, which gives rise to a map $\Delta(i):f\mapsto f\circ i$. It is known that when $\preorderle$ is a Strassen preorder on $S$, then $\Delta(i)$ is surjective (see \cite[Corollary 2.7]{strassen1988asymptotic} and \cite[Corollary 2.17]{zuiddam2018algebraic}), i.e. every monotone homomorphism $T\to\nonnegativereals$ has at least one extension to $S$ that is also a monotone homomorphism. The same is true if $S$ is only assumed to be of polynomial growth and $T$ contains a power universal element.
\begin{proposition}[{\cite[Proposition 3]{vrana2021generalization}} and {\cite[13.16. Theorem]{fritz2020local}}]\label{prop:extension}
Let $S$ be a preordered semiring of polynomial growth, $u\in S$ power universal. Let $T\subseteq S$ be a subsemiring containing $u$, with the restricted preorder, and let $i:T\to S$ be the inclusion map. Then $\Delta(i)$ is surjective.
\end{proposition}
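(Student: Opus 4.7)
The proposition is an extension theorem of the Hahn--Banach / Kadison--Dubois flavor, adapted to the setting of preordered semirings. My plan is to prove it by Zorn's lemma combined with a one-element extension step whose consistency is guaranteed by the power universal element $u$.

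First I would consider the collection $\mathcal{F}$ of pairs $(R,h)$, where $T\subseteq R\subseteq S$ is a subsemiring containing $u$ and $h\colon R\to\nonnegativereals$ is a monotone semiring homomorphism restricting to a given $g\in\Delta(T,\preorderle)$ on $T$. Partially order $\mathcal{F}$ by extension: $(R_1,h_1)\le(R_2,h_2)$ iff $R_1\subseteq R_2$ and $h_2|_{R_1}=h_1$. Every chain has an upper bound obtained by taking the union of the domains and the common extension of the maps, so Zorn's lemma produces a maximal element $(R^*,h^*)$. The remaining task is to show that $R^*=S$.

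Suppose for contradiction $s\in S\setminus R^*$. I would try to extend $h^*$ by choosing a consistent value $\lambda\in\nonnegativereals$ to play the role of $h^*(s)$, which, together with $h^*$, would determine a candidate monotone semiring homomorphism on the subsemiring generated by $R^*\cup\{s\}$. Compatibility with $\preorderle$ and with the semiring operations forces $\lambda$ to belong, for every $n\in\naturals$ and every pair $a,b\in R^*$ with $a\preorderle b\cdot s^n$ (respectively $b\cdot s^n\preorderle a$), to the closed set $\setbuild{\mu\ge 0}{h^*(a)\le h^*(b)\mu^n}$ (respectively $\setbuild{\mu\ge 0}{h^*(b)\mu^n\le h^*(a)}$). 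If the intersection of all these constraints is non-empty, any $\lambda$ in it contradicts the maximality of $(R^*,h^*)$.

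The main obstacle is showing that this intersection is non-empty. Power universality supplies compactness for free: a $k\in\naturals$ with $s\preorderle u^k$ and $1\preorderle u^k s$ yields $h^*(u)^{-k}\le\lambda\le h^*(u)^k$, so one only has to argue inside a bounded interval. The delicate part is showing that the supremum of the lower constraints does not exceed the infimum of the upper ones; otherwise, multiplying a witnessing pair of inequalities $a\preorderle b\cdot s^n$ and $b'\cdot s^m\preorderle a'$ after raising them to appropriate powers would produce, again using the power universal $u$ to absorb slack, an inequality in $R^*$ that $h^*$ cannot respect monotonically. This Archimedean/Kadison--Dubois-style separation step is the heart of the argument and is exactly the contribution of \cite{vrana2021generalization,fritz2020local}; it closes the contradiction and establishes $R^*=S$, so $h^*\in\Delta(S,\preorderle)$ provides the desired preimage of $g$ under $\Delta(i)$.
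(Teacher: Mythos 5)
The paper does not prove this proposition---it is imported as a black box from \cite{vrana2021generalization} and \cite{fritz2020local}---so there is no internal proof to compare against; I evaluate your sketch on its own terms. The concrete gap is in the one-element extension step. The constraints you record on $\lambda=h^*(s)$ are only the monomial ones, coming from relations $a\preorderle b\cdot s^n$ or $b\cdot s^n\preorderle a$ with $a,b\in R^*$, and those do carve out an interval in $[0,\infty)$. But the subsemiring generated by $R^*\cup\{s\}$ consists of all sums $\sum_{i=0}^m a_i s^i$ with $a_i\in R^*$, and any relation $\sum_i a_i s^i\preorderle\sum_j b_j s^j$ valid in $S$ (including equalities, which in a semiring without cancellation can hold between genuinely different coefficient lists) forces the polynomial constraint $\sum_i h^*(a_i)\lambda^i\le\sum_j h^*(b_j)\lambda^j$. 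The feasible set of such a constraint is in general a finite union of closed intervals, not a single interval, so the ``sup of the lower constraints $\le$ inf of the upper constraints'' picture is not even the right description of what must be verified, and power universality only supplies the outer bound $\lambda\in[h^*(u)^{-k},h^*(u)^k]$, not satisfiability of the full system.

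Establishing that this system is consistent is precisely the substance of the theorem, and calling it a ``Kadison--Dubois-style separation'' does not discharge it: the Kadison--Dubois / Archimedean Positivstellensatz proofs do not proceed by one-element extensions either. The route taken for results of this type in the cited references is rather to first prove the duality between the asymptotic preorder and the spectrum, then refine the preorder on $S$ by declaring $t\preorderle' t'$ whenever $t,t'\in T$ and $g(t)\le g(t')$ (and closing under the semiring operations and transitivity), verify that the power-universality and order-embedding hypotheses survive the refinement, and read off an extension $f$ from the non-emptiness of the refined spectrum; a short argument using monotonicity against rationals then shows $f|_T=g$. That strategy avoids any bookkeeping about polynomial constraints on a single unknown, which is why your Zorn-based approach, as written, is missing its central lemma.
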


We will consider two particular semirings. The first one, denoted by $\graphs$, is the set of isomorphism classes of graphs, with operations given by the disjoint union and strong product, and the \emph{cohomomorphism preorder}: $H\le G$ iff there exists a homomorphism $\complement{H}\to\complement{G}$. This preordered semiring was introduced by Zuiddam in \cite{zuiddam2019asymptotic}, where it was proved that $\le$ is a Strassen preorder on $\graphs$ and that the zero-error capacity of a graph $G$ is characterized as
\begin{equation}
\Theta(G)=\min_{f\in\Delta(\graphs,\le)}f(G).
\end{equation}
$\Delta(\graphs,\le)$ has uncountably many elements \cite{vrana2021probabilistic}, including several well-known upper bounds on the Shannon capacity: the fractional clique cover number $\complement{\chi_f}$, the Lov\'asz number $\vartheta$, the fractional Haemers bounds $\mathcal{H}^{\mathbb{F}}_f$ and the projective rank of the complement $\complement{\xi_f}$.

The second semiring, $\ncgraphs$, consists of isomorphism classes of noncommutative graphs, equipped with the operations induced by the direct sum and the tensor product, and where $S\le T$ if there is a cohomomorphism $S\to T$. We have seen that any graph $G$ gives rise to a noncommutative graph $\graphnc{G}$ in a way that is compatible with the operations and the preorder, which gives rise to a monotone semiring homomorphism $\graphs\to\ncgraphs$ (in fact, an order embedding \cite{li2020quantum}).

Further quantum variants have been considered in \cite{li2020quantum}, in particular with the preorder replaced with the entanglement-assisted cohomomorphism preorder $\le_*$ \cite{cubitt2014bounds,stahlke2015quantum}. In that paper it was shown that $\le_*$ is a Strassen preroder both on $\graphs$ and $\ncgraphs$, but $\le$ is not a Strassen preorder on $\ncgraphs$. Strassen's theory therefore provides dual characterizations of entanglement-assisted classical capacities of classical and quantum channels (and, more generally, the asymptotic preorder), and relates the asymptotic spectra $\Delta(\graphs,\le_*)$ and $\Delta(\ncgraphs,\le_*)$ via the extension property. In contrast, little is known about $\Delta(\ncgraphs,\le)$ and its relation to the unassisted capacity, apart from the simple fact that $\Delta(\ncgraphs,\le_*)\subseteq\Delta(\ncgraphs,\le)$, which is a consequence of $\le_*\supseteq\le$.

The asymptotic preorder of $\ncgraphs$ characterizes the trade-off for simultaneously transmitting classical and quantum information: a rate pair $(R_c,R_q)$ is achievable via a channel with confusability graph $S$ iff $\idealclassical{2}^{\lfloor R_cn\rfloor}\otimes\idealquantum{2}^{\lfloor R_qn\rfloor}\asymptoticle S^{\otimes n}$ for all $n$. A necessary condition for this is that $\forall f\in\Delta(\ncgraphs,\le)$ the inequality $R_c+R_q\log f(\idealquantum{2})\le\log f(S)$, and if the dual characterization property analogous to \cref{thm:Strassen} was true for $(\ncgraphs,\le)$, then this condition would be sufficient as well.

\section{Extensions of graph parameters}\label{sec:extensions}

The inclusion $\graphs\to\ncgraphs$ does not satisfy the conditions of \cref{prop:extension} with the unassisted cohomomorphism preorder, therefore the restriction map may not be surjective. In this section we will introduce an intermediate semiring $\mathcal{A}$ such that \cref{prop:extension} applies to the inclusion $\mathcal{A}\to\ncgraphs$ and at the same time the extensions of monotone semiring homomorphisms from $\graphs$ to $\mathcal{A}$ can be given fairly explicitly in terms of $\Delta(\graphs,\le)$ and an additional parameter with a range that depends on the classical parameter. It will turn out that this range is always a closed interval, bounded from below and unbounded from above (unless it is empty), therefore the problem of understanding of $\Delta(\mathcal{A},\le)$ is reduced to determining the endpoint of the interval as a function on $\Delta(\graphs,\le)$.

We let $\mathcal{A}$ be the subsemiring of $\ncgraphs$ generated by $\graphs$ and $\setbuild{\idealquantum{d}}{d\in\naturals}$. By construction, $\mathcal{A}$ contains a power universal element of $\ncgraphs$, therefore the restriction map $\Delta(\ncgraphs,\le)\to\Delta(\mathcal{A},\le)$ is surjective. Clearly, any element of $\Delta(\mathcal{A},\le)$ is uniquely specified by its restriction to $\graphs$ (which is an element of $\Delta(\graphs,\le)$) together with its values on $\idealquantum{d}$ for $d\in\naturals$. Since $\idealquantum{d_1}\otimes\idealquantum{d_2}=\idealquantum{d_1d_2}$ and $d_1\le d_2$ implies $\idealquantum{d_1}\le\idealquantum{d_2}$, the composition with the map $d\mapsto\idealquantum{d}$ is a monotone increasing multiplicative map, which necessarily has the form $d\mapsto d^\alpha$ for some exponent $\alpha\ge 0$. To see this, apply the monotone homomorphism to the chain of inequalities
\begin{equation}
\idealquantum{2}^{\otimes\lfloor n\log d\rfloor}\le\idealquantum{d}^{\otimes n}\le\idealquantum{2}^{\otimes\lceil n\log d\rceil}
\end{equation}
and let $n\to\infty$ after raising to the power $1/n$. Moreover, $\idealclassical{d}\le\idealquantum{d}$ implies that the exponent $\alpha$ is at least $1$. This gives an outer approximation of $\Delta(\mathcal{A},\le)$ parametrized by $\Delta(\graphs,\le)\times[1,\infty)$. We now introduce a notation for the homomorphism (not necessarily monotone) corresponding to any such pair.
\begin{definition}\label{def:falpha}
Let $f\in\Delta(\graphs,\le)$ and $\alpha\in[1,\infty)$. We define the homomorphism $f_\alpha:\mathcal{A}\to\nonnegativereals$ as
\begin{equation}
f_\alpha\left(\bigoplus_{d=1}^r\graphnc{G_d}\otimes\idealquantum{d}\right)=\sum_{d=1}^r f(G_d)d^\alpha.
\end{equation}
\end{definition}

By the preceding discussion, every element of $\Delta(\mathcal{A},\le)$ is of the form $f_\alpha$ for a unique $f\in\Delta(\graphs,\le)$ and $\alpha\ge 1$, and the question is therefore to decide which pairs $(f,\alpha)$ give rise to a monotone function. We will say that the exponent $\alpha$ is \emph{admissible} (for $f$) if $f_\alpha\in\Delta(\mathcal{A},\le)$.

Our next aim is to show that increasing the exponent $\alpha$ does not lead out of $\Delta(\mathcal{A},\le)$, i.e. the set of admissible exponents for a given element in the asymptotic spectrum of graphs is upwards closed. First we argue that monotonicity continues to hold for the increased exponent when the left hand side of the inequality is a single term of the form $\graphnc{H}\otimes\idealquantum{q}$, to which the general case will be reduced via a type decomposition.
\begin{lemma}\label{lem:nosmallterms}
Suppose that $\graphnc{H}\otimes\idealquantum{q}\le\bigoplus_{d=1}^r\graphnc{G_d}\otimes\idealquantum{d}$. Then $\graphnc{H}\otimes\idealquantum{q}\le\bigoplus_{d=q}^r\graphnc{G_d}\otimes\idealquantum{d}$.
\end{lemma}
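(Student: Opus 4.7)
The plan is to take an arbitrary cohomomorphism $\{E_i\}_{i\in I}$ witnessing $\graphnc{H}\otimes\idealquantum{q}\le\bigoplus_{d=1}^r\graphnc{G_d}\otimes\idealquantum{d}$, split each $E_i$ into its components $E_i^{(d)}\colon\complexes^{V(H)}\otimes\complexes^q\to\complexes^{V(G_d)}\otimes\complexes^d$ along the direct-sum decomposition of the target, and show that the ``small-dimension'' components $E_i^{(d)}$ with $d<q$ vanish identically. Once this is established, the same family automatically witnesses a cohomomorphism to the truncated sum $\bigoplus_{d=q}^r\graphnc{G_d}\otimes\idealquantum{d}$: the completeness relation $\sum_i E_i^*E_i=I$ is preserved since no contribution is discarded, and the containment $E_i^*S'E_j\subseteq\graphnc{H}\otimes\idealquantum{q}$ for the smaller operator system $S'=\bigoplus_{d=q}^r\graphnc{G_d}\otimes\idealquantum{d}$ is inherited from the inclusion $S'\subseteq S$.

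The key step is to show $E_i^{(d)}=0$ when $d<q$. To this end I decompose $E_i^{(d)}=\sum_{x\in V(G_d),\,y\in V(H)}\ketbra{x}{y}\otimes F_{i,x,y}^{(d)}$ with each $F_{i,x,y}^{(d)}$ a linear map $\complexes^q\to\complexes^d$, and apply the cohomomorphism condition to the element $0\oplus\cdots\oplus(\ketbra{x}{x}\otimes I_d)\oplus\cdots\oplus 0\in S$, which lies in the $d$-th summand because $\ketbra{x}{x}\in\graphnc{G_d}$ (as $x\adjacentorequal x$) and $I_d\in\idealquantum{d}$. A direct block computation gives
\begin{equation}
E_i^{(d)*}(\ketbra{x}{x}\otimes I_d)E_i^{(d)}=\sum_{y,y'\in V(H)}\ketbra{y}{y'}\otimes F_{i,x,y}^{(d)*}F_{i,x,y'}^{(d)},
\end{equation}
and since every element of $\graphnc{H}\otimes\idealquantum{q}$ has the form $M\otimes I_q$ (as $\idealquantum{q}$ is one-dimensional), matching coefficients forces each operator block $F_{i,x,y}^{(d)*}F_{i,x,y'}^{(d)}$ to be a scalar multiple of $I_q$. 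Specializing to $y=y'$ yields $F_{i,x,y}^{(d)*}F_{i,x,y}^{(d)}=c\cdot I_q$ for some $c\ge 0$, so if $c>0$ the matrix $F_{i,x,y}^{(d)}$ would have rank $q$; but its rank is at most $d<q$, forcing $c=0$ and hence $F_{i,x,y}^{(d)}=0$ for every $i,x,y$.

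The main (and only real) obstacle is recognizing that the cohomomorphism constraint, when evaluated on the diagonal element $\ketbra{x}{x}\otimes I_d$, forces each matrix block to be isometric up to a scalar; once this is in hand the conclusion follows from a trivial dimension count, and the remaining verification that the restricted family defines a cohomomorphism into the truncated direct sum is routine bookkeeping.
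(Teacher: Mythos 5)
Your proof is correct and uses essentially the same idea as the paper: apply the cohomomorphism condition to $\ketbra{x}{x}\otimes I_d$ and use a rank obstruction to conclude that the components of $E_i$ landing in the $d<q$ summands vanish, after which the restriction is routine. The paper packages the rank argument more coarsely (the image $E_i^*(\ketbra{g}{g}\otimes I_d)E_i$ has rank at most $d$ yet must be a multiple of $q$, hence zero) whereas you refine down to the blocks $F_{i,x,y}^{(d)}$, but this is the same obstruction.
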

\begin{proof}
The assumption means that there exist linear maps $E_i:\complexes^{V(H)}\otimes\complexes^q\to\bigoplus_{d=1}^r\complexes^{V(G_d)}\otimes\complexes^d$ ($i\in I$) such that $\sum_{i\in I}E_i^*E_i=I$ and for all $i,j\in I$ the containment
\begin{equation}
E_i^*\left(\bigoplus_{d=1}^r\graphnc{G_d}\otimes\idealquantum{d}\right)E_j\subseteq\graphnc{H}\otimes\idealquantum{q}
\end{equation}
holds. Every operator in $\graphnc{H}\otimes\idealquantum{q}$ is of the form $X\otimes I_q$, therefore its rank is a multiple of $q$. If $g\in V(G_d)$, then $\ketbra{g}{g}\otimes I_d$ is an element of the right hand side and has rank $d$, therefore $E_i^*(\ketbra{g}{g}\otimes I_d)E_i\in\graphnc{H}\otimes\idealquantum{q}$ has rank at most $d$, and at the same time divisible by $q$. It follows that if $d<q$ then $E_i^*(\ketbra{g}{g}\otimes I_d)E_i=0$.

In other words, the range of each $E_i$ must be orthogonal to the term $\complexes^{V(G_d)}\otimes\complexes^d$ for all $d<q$. Let $P$ be the orthogonal projection $\bigoplus_{d=1}^r\complexes^{V(G_d)}\otimes\complexes^d\to\bigoplus_{d=q}^r\complexes^{V(G_d)}\otimes\complexes^d$. Then the linear maps $(PE_i)_{i\in I}$ form a cohomomorphism from $\graphnc{H}\otimes\idealquantum{q}$ to $\bigoplus_{d=q}^r\graphnc{G_d}\otimes\idealquantum{d}$.
\end{proof}

\begin{corollary}\label{cor:singletermlhsmonotone}
If $f_\alpha\in\Delta(\mathcal{A},\le)$, $\alpha\le\beta$, and $\graphnc{H}\otimes\idealquantum{q}\le\bigoplus_{d=1}^r\graphnc{G_d}\otimes\idealquantum{d}$ then
\begin{equation}
f_\beta\left(\graphnc{H}\otimes\idealquantum{q}\right)\le f_\beta\left(\bigoplus_{d=1}^r\graphnc{G_d}\otimes\idealquantum{d}\right).
\end{equation}
\end{corollary}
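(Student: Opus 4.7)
The plan is to combine \cref{lem:nosmallterms} with a direct estimate using the definition of $f_\beta$, so the argument is almost entirely mechanical once the lemma is in hand.

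First I would apply \cref{lem:nosmallterms} to the hypothesis $\graphnc{H}\otimes\idealquantum{q}\le\bigoplus_{d=1}^r\graphnc{G_d}\otimes\idealquantum{d}$ to replace it with $\graphnc{H}\otimes\idealquantum{q}\le\bigoplus_{d=q}^r\graphnc{G_d}\otimes\idealquantum{d}$. Since the additional terms with $d<q$ contribute nonnegative values to $f_\beta$ of the right-hand side, it is enough to prove the stated inequality with the sum restricted to $d\ge q$.

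Next I would unfold \cref{def:falpha} to rewrite the desired inequality as
\begin{equation}
f(H)q^\beta \le \sum_{d=q}^r f(G_d)d^\beta.
\end{equation}
The assumption $f_\alpha\in\Delta(\mathcal{A},\le)$ applied to the same cohomomorphism (still valid after \cref{lem:nosmallterms}) gives
\begin{equation}
f(H)q^\alpha \le \sum_{d=q}^r f(G_d)d^\alpha.
\end{equation}
Since $\beta-\alpha\ge 0$ and every index $d$ in the sum satisfies $d\ge q$, we have $d^{\beta-\alpha}\ge q^{\beta-\alpha}$. Therefore
\begin{equation}
\sum_{d=q}^r f(G_d)d^\beta = \sum_{d=q}^r f(G_d)d^\alpha\cdot d^{\beta-\alpha} \ge q^{\beta-\alpha}\sum_{d=q}^r f(G_d)d^\alpha \ge q^{\beta-\alpha}\cdot f(H)q^\alpha = f(H)q^\beta,
\end{equation}
which is exactly what is required.

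There is no real obstacle here: the essential content has already been packed into \cref{lem:nosmallterms}, which guarantees that no index $d<q$ appears on the right-hand side. Without that restriction the factoring step would fail, because a small $d$ could make $d^{\beta-\alpha}<q^{\beta-\alpha}$ and allow the inequality valid for exponent $\alpha$ to flip for exponent $\beta$. Once the lemma eliminates this case, the rest is the trivial monotonicity of $x\mapsto x^{\beta-\alpha}$ on the interval $[q,\infty)$.
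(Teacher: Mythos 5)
Your proof is correct and follows essentially the same route as the paper's: apply \cref{lem:nosmallterms} to discard the $d<q$ terms, invoke monotonicity of $f_\alpha$ on the restricted inequality, and use $d^{\beta-\alpha}\ge q^{\beta-\alpha}$ for $d\ge q$ to pass from exponent $\alpha$ to $\beta$. The only cosmetic difference is that you write the chain of inequalities starting from the right-hand side rather than the left, and explicitly remark that the discarded $d<q$ terms are nonnegative, which the paper handles silently in the final step.
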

\begin{proof}
By \cref{lem:nosmallterms}, the assumption implies $\graphnc{H}\otimes\idealquantum{q}\le\bigoplus_{d=q}^r\graphnc{G_d}\otimes\idealquantum{d}$. Using the monotonicty of $f_\alpha$ for this inequality, we have
\begin{equation}
\begin{split}
f_\beta\left(\graphnc{H}\otimes\idealquantum{q}\right)
 & = q^{\beta-\alpha}f_\alpha\left(\graphnc{H}\otimes\idealquantum{q}\right)  \\
 & \le q^{\beta-\alpha}f_\alpha\left(\bigoplus_{d=q}^r\graphnc{G_d}\otimes\idealquantum{d}\right)  \\
 & = \sum_{d=q}^r f(G_d)q^{\beta-\alpha}d^\alpha  \\
 & \le \sum_{d=q}^r f(G_d)d^{\beta-\alpha}d^\alpha  \\
 & \le \sum_{d=1}^r f(G_d)d^{\beta}  \\
 & = f_\beta\left(\bigoplus_{d=1}^r\graphnc{G_d}\otimes\idealquantum{d}\right).
\end{split}
\end{equation}
\end{proof}

\begin{theorem}\label{thm:exponentsupperset}
If $f_\alpha\in\Delta(\mathcal{A},\le)$ and $\alpha\le\beta$, then $f_\beta\in\Delta(\mathcal{A},\le)$.
\end{theorem}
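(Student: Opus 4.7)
The plan is to verify monotonicity of $f_\beta$ on $\le$; the definition already makes $f_\beta$ an additive and multiplicative map independent of $\alpha$, so this is the only condition to check. Suppose $A=\bigoplus_{q=1}^{s}\graphnc{H_q}\otimes\idealquantum{q}\le\bigoplus_{d=1}^{r}\graphnc{G_d}\otimes\idealquantum{d}=B$. The strategy is to lift this to an $n$-th tensor power, perform a type decomposition on the left, invoke \cref{cor:singletermlhsmonotone} on each type-class summand, sum, and take an asymptotic $n$-th root. Crucially, \cref{cor:singletermlhsmonotone} only handles a left-hand side of single-term shape $\graphnc{H}\otimes\idealquantum{q}$, so the whole engine is about converting a generic direct sum into many single-term bounds at a cost that is subexponential in $n$.

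First I would use that the cohomomorphism preorder is multiplicative under tensor products (cohomomorphisms compose by tensoring) to conclude $A^{\otimes n}\le B^{\otimes n}$. Next I would carry out the type decomposition of $A^{\otimes n}$: expanding the tensor power as a direct sum indexed by sequences $(q_1,\ldots,q_n)\in\{1,\ldots,s\}^n$ and collecting terms up to isomorphism according to the composition $\tau=(k_1,\ldots,k_s)$ with $k_1+\cdots+k_s=n$, one obtains
\begin{equation}
A^{\otimes n}\cong\bigoplus_{\tau}\graphnc{H'_\tau}\otimes\idealquantum{Q_\tau},
\end{equation}
where $Q_\tau=\prod_q q^{k_q}$, and $H'_\tau$ is the disjoint union of $\binom{n}{k_1,\ldots,k_s}$ copies of the strong product $\boxtimes_q H_q^{\boxtimes k_q}$. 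The key algebraic point is that the distributive identity $(\graphnc{G_1}\oplus\graphnc{G_2})\otimes\idealquantum{d}=\graphnc{G_1\sqcup G_2}\otimes\idealquantum{d}$ lets us absorb the multinomial multiplicity into the classical factor, so that each type-class summand has the single-term form required by \cref{cor:singletermlhsmonotone}.

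Fix a type $\tau$. Then $T_\tau:=\graphnc{H'_\tau}\otimes\idealquantum{Q_\tau}$ is a direct summand of $A^{\otimes n}$, hence $T_\tau\le A^{\otimes n}\le B^{\otimes n}$. Rewriting $B^{\otimes n}$ in the canonical form $\bigoplus_d\graphnc{G''_d}\otimes\idealquantum{d}$ by the analogous decomposition and applying \cref{cor:singletermlhsmonotone}, I obtain
\begin{equation}
\binom{n}{k_1,\ldots,k_s}f\!\left(\boxtimes_q H_q^{\boxtimes k_q}\right)\prod_q q^{k_q\beta}=f_\beta(T_\tau)\le f_\beta(B^{\otimes n})=f_\beta(B)^n.
\end{equation}
Summing over the $N(n,s)=\binom{n+s-1}{s-1}$ types and using that $f_\beta$ is additive and multiplicative,
\begin{equation}
f_\beta(A)^n=f_\beta(A^{\otimes n})=\sum_\tau f_\beta(T_\tau)\le N(n,s)\,f_\beta(B)^n.
\end{equation}
Taking the $n$-th root and letting $n\to\infty$, the prefactor $N(n,s)^{1/n}\to 1$ because $N(n,s)$ grows polynomially in $n$, yielding $f_\beta(A)\le f_\beta(B)$.

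The main obstacle I anticipate is the bookkeeping in the type decomposition: one has to be careful that the multinomial multiplicity can be folded entirely into the classical part (so that the $\idealquantum{Q_\tau}$ factor is a genuine single copy), and that $B^{\otimes n}$ is rewritten in the canonical form demanded by \cref{cor:singletermlhsmonotone}. Once that is arranged, the remaining ingredients — multiplicativity of $\le$ under tensor, the single-term estimate, and the polynomial growth of $N(n,s)$ — combine cleanly via the asymptotic averaging above.
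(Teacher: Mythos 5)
Your proof is correct and is essentially the paper's own argument: tensor up to the $n$-th power, decompose the left-hand side by type, absorb the multinomial multiplicity into the classical factor so that each type-class summand has the single-term shape required by \cref{cor:singletermlhsmonotone} (the paper writes this absorbed factor as $\complement{K_{\lvert\typeclass{n}{Q}\rvert}}\strongproduct\prod_d H_d^{\strongproduct nQ(d)}$, which is exactly your disjoint union of $\binom{n}{k_1,\ldots,k_s}$ copies), apply that corollary to each summand, sum, and kill the polynomial prefactor (you use the exact count $\binom{n+s-1}{s-1}$, the paper the standard bound $(n+1)^r$) by taking $n$-th roots. The only cosmetic difference is that you let the two direct sums have different lengths $s$ and $r$, whereas the paper pads to a common length.
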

\begin{proof}
Let $T=\bigoplus_{d=1}^r\graphnc{H_d}\otimes\idealquantum{d}$ and $S=\bigoplus_{d=1}^r\graphnc{G_d}\otimes\idealquantum{d}$ and suppose that $T\le S$. Then $T^{\otimes n}\le S^{\otimes n}$ for all $n\in\naturals$. For $Q\in\distributions[n]([r])$, let us introduce the noncommutative graph
\begin{equation}
T_Q=\graphnc{\complement{K_{\lvert\typeclass{n}{Q}\rvert}}\strongproduct\prod_{d=1}^r H_d^{\strongproduct nQ(d)}}\otimes\idealquantum{\prod_{d=1}^rd^{nQ(d)}}.
\end{equation}
Note that $T_Q$ is the product of a classical graph and the confusability graph of a noiseless quantum channel, therefore \cref{cor:singletermlhsmonotone} applies to any inequality with $T_Q$ on the left hand side. Note also that $T^{\otimes n}$ is the direct sum of the $T_Q$ for all $Q\in\distributions[n]([r])$, therefore $T_Q\le S^{\otimes n}$. Using \cref{cor:singletermlhsmonotone} and that $\lvert\distributions[n]([r])\rvert\le(n+1)^r$ \cite[Lemma 2.2]{csiszar2011information}, we obtain
\begin{equation}
\begin{split}
f_\beta(T)^n
 & = f_\beta(T^{\otimes n})  \\
 & = \sum_{Q\in\distributions[n]([r])}f_\beta(T_Q)  \\
 & \le \lvert\distributions[n]([r])\rvert f_\beta(S^{\otimes n})  \\
 & \le (n+1)^r f_\beta(S)^n,
\end{split}
\end{equation}
which implies $f_\beta(T)\le f_\beta(S)$ by taking $n$th roots and $n\to\infty$.
\end{proof}

\Cref{thm:exponentsupperset} implies that, for any $f\in\Delta(\graphs,\le)$, the set of admissible exponents $\alpha$ is either empty or an interval that is unbounded from above. For example, it follows from the result of \cite{duan2012zero} that for the Lov\'asz number $f=\vartheta$ \cite{lovasz1979shannon} the exponent $\alpha=2$ gives a monotone extension, therefore every exponent in $[2,\infty)$ is admissible. Note that the extension $\tilde{\vartheta}$ constructed in \cite{duan2012zero}, by virtue of also providing an upper bound on the entanglement-assisted capacity, can be much larger than the Shannon capacity: e.g. $\Theta(\idealquantum{d})=d<d^2=\tilde{\vartheta}$ precisely because the exponent is $2$. What we apparently learn from \cref{thm:exponentsupperset} is that there exist \emph{even worse} upper bounds on the Shannon capacity that extend $\vartheta$. A more optimistic viewpoint is that extensions with large exponents may imply \emph{better} upper bounds on the zero-error \emph{quantum} capacity. In addition, we will see below by an explicit calculation that the set of admissible exponents for $\vartheta$ is in fact $[1,\infty)$.

\subsection{Concrete parameters}

The aim of this section is to find or estimate the set of admissible exponents for specific elements of $\Delta(\graphs,\le)$. To simplify notation, we will use a slight extension of the ``adjacent or equal'' relation $\adjacentorequal$: considering the noncommutative graph $\bigoplus_{d=1}^r\graphnc{G_d}\otimes\idealquantum{d}$ and elements $g\in G_d$, $g'\in G_{d'}$, we will write $g\adjacentorequal g'$ if $d=d'$ and $g$ is adjacent or equal to $g'$ in $G_d$ (effectively regarding $\bigsqcup_{d=1}^r V(G_d)$ as the vertex set of the disjoint union $\bigsqcup_{d=1}^r G_d$, although this graph will not enter the discussion otherwise). In addition, we will introduce the map $\pi:\bigsqcup_{d=1}^r V(G_d)\to[r]$, sending $g\in G_d$ to $d$, for each such direct sum (the map $\pi$ depends on the noncommutative graph, but hopefully there is no risk of confusion arising from not including it in the notation).

We start with a lemma that allows us to bring a generic cohomomorphism between the elements of $\mathcal{A}$ to a simpler form, which can be described via a function between the vertex sets together with a suitable family of isometries. For the purposes of this section, it would be sufficient to prove the statement in the special case when $T$ is a classical graph. The general case will be used in \cref{sec:convexity}.
\begin{lemma}\label{lem:specialcohomomorphism}
Let $S=\bigoplus_{d=1}^r\graphnc{G_d}\otimes\idealquantum{d}$ and $T=\bigoplus_{d=1}^r\graphnc{H_d}\otimes\idealquantum{d}$ be two arbitrary elements of $\mathcal{A}$. Then the following are equivalent:
\begin{enumerate}
\item\label{it:cohomomorphismexists} $T\le S$,
\item\label{it:specialcohomomorphism} there is a function $\varphi:\bigsqcup_{d=1}^r V(H_d)\to\bigsqcup_{d=1}^r V(G_d)$ and a family of isometries $U_h:\complexes^{\pi(h)}\to\complexes^{\pi(\varphi(h))}$ ($h\in \bigsqcup_{d=1}^r V(H_d)$) such that $h\not\adjacentorequal h'\implies(\varphi(h)\not\adjacentorequal\varphi(h')\text{ or }U_h^*U_{h'}=0)$ and $(h\adjacentorequal h'\text{ and }\varphi(h)\adjacentorequal\varphi(h'))\implies U_h^*U_{h'}=cI_{\pi(h)}$ for some $c\in\complexes$.
\end{enumerate}
\end{lemma}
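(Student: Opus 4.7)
The plan is to prove the two directions separately: (ii) $\Rightarrow$ (i) is a direct Kraus-operator construction, while (i) $\Rightarrow$ (ii) requires unpacking the block structure of the Kraus operators forced by the cohomomorphism condition. \cref{lem:nosmallterms} will not be invoked directly, but the dimension constraint $\pi(h)\le\pi(\varphi(h))$ it encodes will emerge naturally.

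For (ii) $\Rightarrow$ (i), I would define Kraus operators $(E_h)_{h\in\bigsqcup_d V(H_d)}$ by $E_h=\ketbra{\varphi(h)}{h}\otimes U_h$, with each $E_h$ acting nontrivially only from the $\pi(h)$-th summand of the underlying Hilbert space of $T$ to the $\pi(\varphi(h))$-th summand of that of $S$. Isometricity of each $U_h$ gives $\sum_h E_h^* E_h=\sum_h\ketbra{h}{h}\otimes I_{\pi(h)}=I$. For a generic element $A=\ketbra{g}{g'}\otimes I_d\in\graphnc{G_d}\otimes\idealquantum{d}\subseteq S$ (with $g\adjacentorequal g'$ in $G_d$), a direct calculation yields
\begin{equation}
E_h^* A E_{h'}=\delta_{\varphi(h),g}\,\delta_{\varphi(h'),g'}\,\ketbra{h}{h'}\otimes U_h^* U_{h'},
\end{equation}
which vanishes unless $\varphi(h)\adjacentorequal\varphi(h')$; in the remaining nonzero case conditions (i) and (ii) force either $U_h^* U_{h'}=0$ or both $h\adjacentorequal h'$ and $U_h^* U_{h'}=cI$, placing the result in $\graphnc{H_{\pi(h)}}\otimes\idealquantum{\pi(h)}\subseteq T$. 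Linearity in $A$ completes the check.

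For (i) $\Rightarrow$ (ii), I would start from an arbitrary cohomomorphism $(E_i)_{i\in I}$ and extract, for each $i$, $g\in\bigsqcup_d V(G_d)$ and $h\in\bigsqcup_d V(H_d)$, the block $F_i^{g,h}=(\bra{g}\otimes I_{\pi(g)})\,E_i\,(\ket{h}\otimes I_{\pi(h)})\colon\complexes^{\pi(h)}\to\complexes^{\pi(g)}$. The cohomomorphism condition applied to operators $\ketbra{g_0}{g_0'}\otimes I_d\in S$ (with $g_0\adjacentorequal g_0'$ in $G_d$) identifies the $(h,h')$-block of $E_i^*(\ketbra{g_0}{g_0'}\otimes I_d)E_j$ as $(F_i^{g_0,h})^* F_j^{g_0',h'}$, and requiring this to lie in $T=\bigoplus_e\graphnc{H_e}\otimes\idealquantum{e}$ yields three constraints valid for all such inputs: this composition vanishes whenever $\pi(h)\ne\pi(h')$; it vanishes whenever $\pi(h)=\pi(h')$ but $h\not\adjacentorequal h'$; and it equals a scalar multiple of $I_{\pi(h)}$ whenever $h\adjacentorequal h'$. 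Specializing to $i=j$, $g_0=g_0'=g$ in the last clause forces each nonzero $F_i^{g,h}$ to be a scaled isometry.

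To assemble $\varphi$ and the $U_h$, I would restrict $\sum_i E_i^* E_i=I$ to the $\ketbra{h}{h}$-diagonal, obtaining $\sum_{i,g}(F_i^{g,h})^* F_i^{g,h}=I_{\pi(h)}$ and hence at least one pair $(i_h,g_h)$ with $F_{i_h}^{g_h,h}\ne 0$; I would then set $\varphi(h)=g_h$ and take $U_h$ to be $F_{i_h}^{g_h,h}$ rescaled to a true isometry. Conditions (i), (ii) for $\varphi$ and $(U_h)$ then follow by applying the three constraints to $(i_h,i_{h'})$ and $(\varphi(h),\varphi(h'))$, which is legitimate precisely when $\varphi(h)\adjacentorequal\varphi(h')$---which is exactly the case in which those conditions have nontrivial content. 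The main obstacle is the bookkeeping required to pin down the three constraints: an element of the direct-sum operator system $T$ must be decomposed into ``$(h,h')$-entries'' across blocks of varying dimension, and one must recognise that the only patterns compatible with both the direct-sum structure and the factor $X\otimes I_e$ of elements of $\graphnc{H_e}\otimes\idealquantum{e}$ are precisely those listed above. Once this bookkeeping is done, the extraction of $\varphi$, $U_h$ and the verification of the conditions are essentially automatic.
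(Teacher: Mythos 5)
Your proof is correct and follows essentially the same route as the paper's: for (ii)$\Rightarrow$(i) you use the identical Kraus operators $E_h=\ketbra{\varphi(h)}{h}\otimes U_h$, and for (i)$\Rightarrow$(ii) you extract $\varphi(h)$ and $U_h$ from the decomposition of the identity $\sum_{i,g}(F_i^{g,h})^*F_i^{g,h}=I_{\pi(h)}$ and then read off the required orthogonality/scalar conditions from $E_{i_h}^*(\ketbra{\varphi(h)}{\varphi(h')}\otimes I)E_{i_{h'}}\in T$, exactly as in the paper (which simply works with the blocks $(\bra{h}\otimes I)E_i(\ket{g}\otimes I)$ without naming them $F_i^{g,h}$). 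The only cosmetic difference is that you enumerate the three block constraints on an element of $T$ up front; the paper invokes them implicitly by computing $(\bra{h}\otimes I_{\pi(h)})T(\ket{h'}\otimes I_{\pi(h')})$ case by case, which amounts to the same thing.
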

\begin{proof}
\ref{it:cohomomorphismexists}$\implies$\ref{it:specialcohomomorphism}:
The inequality $T\le S$ means that there exist linear maps $E_i:\bigoplus_{d=1}^r\complexes^{V(H_d)}\otimes\complexes^d\to\bigoplus_{d=1}^r\complexes^{V(G_d)}\otimes\complexes^d$ ($i\in I$) such that $\sum_{i\in I}E_i^*E_i=I$ and for all $i,j\in I$ we have $E_i^*SE_j\subseteq T$.
Let $h\in \bigsqcup_{d=1}^r V(H_d)$. Then
\begin{equation}
I_{\pi(h)}=\braket{h}{h}\otimes I_{\pi(h)}=\sum_{g\in\bigsqcup_{d=1}^r V(G_d)}\sum_{i\in I}(\bra{h}\otimes I_{\pi(h)})E_i^*(\ketbra{g}{g}\otimes I_{\pi(g)})E_i(\ket{h}\otimes I_{\pi(h)}),
\end{equation}
therefore there exists $g,i$ such that the corresponding term does not vanish. Choose one such $g$ and $i$ for each $h$, and let $\varphi(h)=g$ and
\begin{equation}
U_h=\frac{(\bra{g}\otimes I_{\pi(g)})E_i(\ket{h}\otimes I_{\pi(h)})}{\norm{(\bra{g}\otimes I_{\pi(g)})E_i(\ket{h}\otimes I_{\pi(h)})}}
\end{equation}
Since $\ketbra{g}{g}\otimes I_{\pi(g)}\in S$, the condition $E_i^*SE_i\subseteq T$ implies that $U_h^*U_h\in(\bra{h}\otimes I_{\pi(h)})T(\ket{h}\otimes I_{\pi(h)})=\complexes I_{\pi(h)}$, i.e. $U_h$ is proportional to an isometry. The normalization ensures that $U_h$ itself is an isometry.

Let $h,h'\in \bigsqcup_{d=1}^r V(H_d)$ and $i,i'\in I$ be the corresponding indices. If $h\not\adjacentorequal h'$ and $\varphi(h)\adjacentorequal\varphi(h')$, then using $\ketbra{\varphi(h)}{\varphi(h')}\otimes I_{\pi(h)}\in S$ and $E_i^*SE_{i'}\subseteq T$ we get $U_h^*U_{h'}\in(\bra{h}\otimes I_{\pi(h)})T(\ket{h'}\otimes I_{\pi(h')})=0$. On the other hand, if $h\adjacentorequal h'$ and $\varphi(h)\adjacentorequal\varphi(h')$, then by the same reasoning we have $U_h^*U_{h'}\in(\bra{h}\otimes I_{\pi(h)})T(\ket{h'}\otimes I_{\pi(h')})=\complexes I_{\pi(h)}$.

\ref{it:specialcohomomorphism}$\implies$\ref{it:cohomomorphismexists}:
Given $\varphi$ and $U_h$ as in the statement, we can choose $I=\bigsqcup_{d=1}^r V(H_d)$ as the index set and the Kraus operators $E_h=\ketbra{\varphi(h)}{h}\otimes U_h$. Then
\begin{equation}
\sum_{h\in\bigsqcup_{d=1}^r V(H_d)}E_h^*E_h=\sum_{h\in\bigsqcup_{d=1}^r V(H_d)}\ketbra{h}{h}\otimes U_h^*U_h=\sum_{h\in\bigsqcup_{d=1}^r V(H_d)}\ketbra{h}{h}\otimes I_{\pi(h)}=I
\end{equation}
and
\begin{equation}
E_h^*SE_{h'}=\begin{cases}
\complexes\ketbra{h}{h'}\otimes U_h^*U_{h'} & \text{if $\varphi(h)\adjacentorequal\varphi(h')$}  \\
0 & \text{if $\varphi(h)\not\adjacentorequal\varphi(h')$}.
\end{cases}
\end{equation}
In the first case, either $h\adjacentorequal h'$, in which case $\complexes\ketbra{h}{h'}\otimes U_h^*U_{h'}=\complexes\ketbra{h}{h'}\otimes I_{\pi(h)}\in T$, or $h\not\adjacentorequal h'$, and then the condition ensures $U_h^*U_{h'}=0$.
\end{proof}

In the special case when $T$ is a graph (i.e. $H_d$ is empty unless $d=1$), the isometries $U_h$ are essentially unit vectors $u_h\in\complexes^{\pi(\varphi(h))}$, and the condition is that $h\not\adjacentorequal h'\implies(\varphi(h)\not\adjacentorequal \varphi(h')\text{ or }\braket{u_h}{u_{h'}}=0)$.

\paragraph{Fractional clique cover number.}

Shannon introduced the fractional clique cover number $\complement{\chi_f}$ as an upper bound on the zero-error capacity \cite{shannon1956zero}. This parameter is the largest element of the asymptotic spectrum of graphs \cite{zuiddam2019asymptotic} (in abstract terms, this amounts to saying that it is equal to the asymptotic rank in the semiring of graphs, see \cite[Theorem
3.8]{strassen1988asymptotic} and \cite[Corollary 2.13]{zuiddam2018algebraic}). 

We now show that, interestingly, this fundamental upper bound does not extend to noncommutative graphs (in a compatible way with the preordered semiring structure). Recall that the orthogonal rank $\xi(G)$ of a graph $G$ is the smallest dimension $d$ such that it is possible to map $V(G)$ to the set of unit vectors of $\complexes^d$ in such a way that adjacent vertices get mapped to orthogonal ones.

\begin{proposition}\label{prop:fccnoextensions}
$\complement{\chi_f}$ does not extend to an element of $\Delta(\mathcal{A},\le)$.
\end{proposition}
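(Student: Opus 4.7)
My plan is to argue by contradiction. Suppose $(\complement{\chi_f})_\alpha$ is a monotone semiring homomorphism on $\mathcal{A}$ for some $\alpha \ge 1$. I would first translate monotonicity into a universal polynomial bound between the fractional chromatic number and the orthogonal rank, and then refute this bound using a well-known family of graphs that exhibits an exponential separation between these two parameters.

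The translation step specializes \cref{lem:specialcohomomorphism} to $T = \graphnc{\complement{H}}$ and $S = \idealquantum{d}$ for an arbitrary graph $H$. Since $\idealquantum{d} = \graphnc{K_1} \otimes \idealquantum{d}$ has only one underlying vertex (adjacent or equal to itself), the cohomomorphism data reduce to a family of unit vectors $(u_h)_{h \in V(H)} \subseteq \complexes^d$ together with the constant vertex map $\varphi$. The adjacency condition collapses to $\braket{u_h}{u_{h'}} = 0$ whenever $h$ and $h'$ are distinct and non-adjacent in $\complement{H}$, i.e.\ whenever $h \adjacent h'$ in $H$; this is precisely an orthogonal representation of $H$ in dimension $d$. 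Therefore $\graphnc{\complement{H}} \le \idealquantum{d}$ if and only if $\xi(H) \le d$. Using $\complement{\chi_f}(\complement{H}) = \chi_f(H)$ and $(\complement{\chi_f})_\alpha(\idealquantum{d}) = d^\alpha$, applying the hypothetical monotone map to both sides would yield the universal bound
\begin{equation}
\chi_f(H) \le \xi(H)^\alpha \qquad \text{for every graph } H.
\end{equation}

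For the refutation, I would use the Hamming orthogonality graph $\Omega_n$ on vertex set $\{-1, +1\}^n$ with $n$ a multiple of $4$, in which two vertices are adjacent iff they are orthogonal with respect to the standard inner product. The normalization $v \mapsto v/\sqrt{n}$ is an orthogonal representation in $\complexes^n$, so $\xi(\Omega_n) \le n$. By the Frankl--R\"odl forbidden-distance theorem the independence number satisfies $\alpha(\Omega_n) \le (2 - \epsilon)^n$ for some $\epsilon > 0$ independent of $n$, hence
\begin{equation}
\chi_f(\Omega_n) \ge \frac{2^n}{\alpha(\Omega_n)} \ge \left(\frac{2}{2-\epsilon}\right)^n.
\end{equation}
This grows super-polynomially in $n$, so $\chi_f(\Omega_n) > n^\alpha \ge \xi(\Omega_n)^\alpha$ for all sufficiently large $n$, contradicting the bound above. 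Since $\alpha \in [1, \infty)$ was arbitrary, no admissible exponent exists.

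The main obstacle is producing a graph family with a super-polynomial gap between $\chi_f$ and $\xi$. The Frankl--R\"odl bound on independent sets in the Hamming orthogonality graph is the cleanest off-the-shelf tool for this purpose, though any exponential or even merely super-polynomial lower bound on $\chi_f(\Omega_n)$ together with the trivial $\xi(\Omega_n) \le n$ would do. Everything else is a direct unpacking of \cref{lem:specialcohomomorphism} and of the definition of $f_\alpha$.
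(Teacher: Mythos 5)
Your proof is correct and takes essentially the same approach as the paper: both derive the universal bound $\chi_f(H)\le\xi(H)^\alpha$ from monotonicity applied to $\graphnc{\complement{H}}\le\idealquantum{\xi(H)}$, and both refute it using the Frankl--R\"odl estimate for the Hamming/Hadamard orthogonality graph $\Omega_n$. The only cosmetic difference is that you derive the inequality $\graphnc{\complement{H}}\le\idealquantum{\xi(H)}$ directly from \cref{lem:specialcohomomorphism}, whereas the paper simply cites this fact from Stahlke's work; your derivation is a valid and slightly more self-contained route to the same intermediate step.
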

\begin{proof}
Suppose that $\alpha\ge 1$ is an admissible exponent for $\complement{\chi_f}$. Then for any graph $G$ with at least two distinct nonadjacent vertices (so that $\complement{\xi}(G)>1$), applying the extension to the inequality $\graphnc{G}\le\idealquantum{\complement{\xi}(G)}$ \cite[Theorem 12.]{stahlke2015quantum} gives
\begin{equation}\label{eq:fccalphalowerbound}
\frac{\log\complement{\chi_f}(G)}{\log\complement{\xi}(G)}\le\alpha.
\end{equation}

For $n\in\naturals$ a multiple of $4$, consider the Hadamard graph $\Omega_n$ \cite{ito1985hadamard1,ito1985hadamard2} whose vertex set is $\{-1,1\}^n$, and two such vectors form an edge iff they are orthogonal. Then clearly $\xi(\Omega_n)\le n$. On the other hand, Frankl and R\"odl \cite{frankl1987forbidden} showed that $\alpha(\Omega_n)\le(2-\epsilon)^n$ for some $\epsilon>0$ and all sufficiently large $n$ (see also \cite{frankl1986orthogonal} for the precise value when $n/4$ is a power of an odd prime), which implies $\chi_f(\Omega_n)\ge\frac{\lvert V(\Omega_n)\rvert}{\alpha(\Omega_n)}\ge\frac{2^n}{(2-\epsilon)^n}$ (see \cite[Proposition 3.1.1]{scheinerman2011fractional} for the first inequality). Choose $G=\complement{\Omega_{4k}}$ ($k\in\positiveintegers$) in the estimate \eqref{eq:fccalphalowerbound} and take the limit $k\to\infty$ to get
\begin{equation}
\begin{split}
\alpha
 & \ge \limsup_{k\to\infty}\frac{\log\chi_f(\Omega_{4k})}{\log\xi(\Omega_{4k})}  \\
 & \ge \limsup_{k\to\infty}\frac{4k\log\frac{2}{2-\epsilon}}{\log(4k)}=\infty,
\end{split}
\end{equation}
a contradiction.
\end{proof}

\paragraph{Lov\'asz number.}

The vectors $(u_g)_{g\in V(G)}$ form an \emph{orthonormal representation} of the graph $G$ if each $u_g$ is a unit vector in a complex inner product space $\mathcal{H}$, and $g\not\adjacentorequal g'$ implies $\braket{u_g}{u_{g'}}=0$. The Lov\'asz $\vartheta$ number of the graph is defined as \cite{lovasz1979shannon}
\begin{equation}
\vartheta(G)=\min_{(u_g)_{g\in V(G)},c}\max\frac{1}{\lvert\braket{c}{u_g}\rvert^2},
\end{equation}
where the minimization is over orthonormal representations of $g$ and unit vectors $c$ in $\mathcal{H}$ (the minimum is attained since we may assume $\dim\mathcal{H}\le\lvert V(G)\rvert$ by restricting to the span of the vectors). $\vartheta$ is an element of $\Delta(\graphs,\le)$ \cite{zuiddam2019asymptotic} (see e.g. the review \cite{knuth1994sandwich} for the proofs of the required properties).

Duan, Severini, and Winter defined a quantum version $\tilde{\vartheta}$ of the Lov\'asz $\vartheta$ number \cite{duan2012zero}, which satisfies $\tilde{\vartheta}(\idealquantum{d})=d^2$, and is a monotone semiring-homomorphism $\ncgraphs\to\nonnegativereals$, therefore any exponent $\alpha\ge 2$ is admissible for $\vartheta$ by \cref{thm:exponentsupperset}. We improve this bound to $1$, i.e. show that the set of admissible exponents is $[1,\infty)$.

\begin{proposition}\label{prop:thetaoneadmissible}
The set of admissible exponents for $\vartheta$ is $[1,\infty)$.
\end{proposition}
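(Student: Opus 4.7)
The plan is to show that the exponent $\alpha=1$ is admissible, from which the full statement follows by \cref{thm:exponentsupperset}. That $\vartheta_1$ of \cref{def:falpha} is a semiring homomorphism $\mathcal{A}\to\nonnegativereals$ follows routinely from the additivity and multiplicativity of $\vartheta$ on $\graphs$ together with $\vartheta(\complement{K_d})=d$, so only monotonicity requires proof.

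Suppose $T=\bigoplus_{d=1}^r\graphnc{H_d}\otimes\idealquantum{d}\le S=\bigoplus_{d=1}^r\graphnc{G_d}\otimes\idealquantum{d}$. Introduce the auxiliary classical graphs $G^\ast=\bigsqcup_d G_d\strongproduct\complement{K_d}$ and $H^\ast=\bigsqcup_d H_d\strongproduct\complement{K_d}$. Using additivity of $\vartheta$ under disjoint union, multiplicativity under strong product, and $\vartheta(\complement{K_d})=d$, one has $\vartheta(G^\ast)=\vartheta_1(S)$ and $\vartheta(H^\ast)=\vartheta_1(T)$, so it is enough to show $\vartheta(H^\ast)\le\vartheta_1(S)$. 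Apply \cref{lem:specialcohomomorphism} to $T\le S$ to extract a vertex map $\varphi$ and isometries $U_h\colon\complexes^{\pi(h)}\to\complexes^{\pi(\varphi(h))}$.

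For each $d$, fix an optimal orthonormal representation $(v_g^{(d)})_{g\in V(G_d)}$ of $G_d$ in a Hilbert space $\mathcal{K}_d$ with unit handle vector $c_d$, so that $|\langle c_d,v_g^{(d)}\rangle|^2\ge 1/\vartheta(G_d)$ for all $g$. For $h\in V(H_{d'})$ and $i\in[d']$, define
\[
w_{h,i}=v_{\varphi(h)}^{(\pi(\varphi(h)))}\otimes U_h e_i\in\bigoplus_d\mathcal{K}_d\otimes\complexes^d,
\]
placed on the $\pi(\varphi(h))$-th summand. A short case analysis based on the two conditions in \cref{lem:specialcohomomorphism} shows that $(w_{h,i})$ is an orthonormal representation of $H^\ast$: whenever $(h,i)\not\adjacentorequal(h',i')$, orthogonality follows either from the ambient $v$-factors being orthogonal (if $\varphi(h)\not\adjacentorequal\varphi(h')$), from $U_h^\ast U_{h'}=0$ in the first case of the lemma, or from the scalar proportionality $U_h^\ast U_{h'}=cI$ combined with $i\ne i'$ in the second case.

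The last task is to supply a good handle. Set $\mu_d=d\vartheta(G_d)/\vartheta_1(S)$, so that $\sum_d\mu_d=1$, and consider the density operator $\rho=\bigoplus_d\mu_d\ketbra{c_d}{c_d}\otimes\tfrac{1}{d}I_d$. For $d=\pi(\varphi(h))$ a direct computation yields $\langle w_{h,i}|\rho|w_{h,i}\rangle=\mu_d|\langle c_d,v_{\varphi(h)}\rangle|^2/d\ge 1/\vartheta_1(S)$. Purifying $\rho$ to a unit vector $\Psi$ in an enlarged space $(\bigoplus_d\mathcal{K}_d\otimes\complexes^d)\otimes\mathcal{H}'$ and replacing each $w_{h,i}$ by $w_{h,i}\otimes u_{h,i}$ for a suitable unit vector $u_{h,i}\in\mathcal{H}'$ preserves the ONR property (since orthogonality of the $w$'s is inherited by the tensors) while arranging $|\langle\Psi,w_{h,i}\otimes u_{h,i}\rangle|^2=\langle w_{h,i}|\rho|w_{h,i}\rangle\ge 1/\vartheta_1(S)$, whence $\vartheta(H^\ast)\le\vartheta_1(S)$. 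The delicate point is precisely this passage through a density operator: a single unit handle in the original Hilbert space cannot produce uniform overlap with all the $w_{h,i}$, because the images of the $U_h$ can be arbitrary $d'$-dimensional subspaces of $\complexes^d$, and the maximally mixed factor $\tfrac{1}{d}I_d$ averages symmetrically over those directions before a pure handle is recovered by purification.
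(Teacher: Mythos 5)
Your proof is correct and is essentially the same as the paper's: reduce to a classical auxiliary graph $H^\ast=\bigsqcup_d H_d\strongproduct\complement{K_d}$, apply \cref{lem:specialcohomomorphism}, tensor optimal orthonormal representations of the $G_d$ with the image vectors $U_he_i$, and pick a handle weighted by $\mu_d\propto d\,\vartheta(G_d)$. The only presentational difference is that you arrive at the extra tensor factor by purifying the mixed handle $\rho=\bigoplus_d\mu_d\ketbra{c_d}{c_d}\otimes\tfrac{1}{d}I_d$, whereas the paper writes the same construction explicitly as $w_h=v_{\varphi(h)}\otimes u_h\otimes\overline{u_h}$ with handle $c=\bigoplus_d\sqrt{\lambda_d}\,c_d\otimes\tfrac{1}{\sqrt d}\sum_i\ket{i}\ket{i}$; a short Schmidt computation shows these coincide (the ``suitable'' $u_{h,i}$ is the normalized conditional vector $(\bra{w_{h,i}}\otimes I)\ket{\Psi}$, which for the maximally mixed factor is exactly $\overline{u_h}$).
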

\begin{proof}
By \cref{thm:exponentsupperset} it suffices to prove that the exponent $1$ is admissible. Suppose that $\bigoplus_{d=1}^r\graphnc{H_d}\otimes\idealquantum{d}\le\bigoplus_{d=1}^r\graphnc{G_d}\otimes\idealquantum{d}$. Let $H=\bigsqcup_{d=1}^r H_d\strongproduct\complement{K_d}$. Since $\idealclassical{d}\le\idealquantum{d}$, we also have $\graphnc{H}\le\bigoplus_{d=1}^r\graphnc{G_d}\otimes\idealquantum{d}$. By \cref{lem:specialcohomomorphism} there exists a function $\varphi:V(H)\to\bigsqcup_{d=1}^r V(G_d)$ and unit vectors $u_h\in\complexes^{\pi(\varphi(h))}$ such that $h\not\adjacentorequal h'\implies(\varphi(h)\not\adjacentorequal\varphi(h')\text{ or }\braket{u_h}{u_{h'}}=0)$.

Let $(v_g)_{g\in G_d}$ be orthonormal representations of $G_d$ in $\complexes^{n_d}$ and $c_d\in\complexes^{n_d}$ a unit vectors ($d\in[r]$) such that
\begin{equation}
\vartheta(G_d)=\max_{g\in V(G_d)}\frac{1}{\lvert\braket{c_d}{v_g}\rvert^2}.
\end{equation}
Combining ideas from the proof that $\vartheta\le\xi$ and that $\vartheta$ is additive under the disjoint union (see \cite[Theorem 11]{lovasz1979shannon} and \cite[18.]{knuth1994sandwich}), we form an orthonormal representation of $H$ in $\bigoplus_{d=1}^r\complexes^{n_d}\otimes\complexes^d\otimes\complexes^d$ as
\begin{equation}
w_h=v_{\varphi(h)}\otimes u_h\otimes\overline{u_h}\in\complexes^{n_{\pi(\varphi(h))}}\otimes\complexes^{\pi(\varphi(h))}\otimes\complexes^{\pi(\varphi(h))}\subseteq\bigoplus_{d=1}^r\complexes^{n_d}\otimes\complexes^d\otimes\complexes^d,
\end{equation}
and consider the unit vector
\begin{equation}
c=\bigoplus_{d=1}^r\sqrt{\lambda_d}c_d\otimes\frac{1}{\sqrt{d}}\sum_{i=1}^d\ket{i}\otimes\ket{i}
\end{equation}
with convex weights
\begin{equation}
\lambda_d=\frac{d\vartheta(G_d)}{\sum_{j=1}^r j\vartheta(G_j)}.
\end{equation}
To see that $(w_h)_{h\in V(H)}$ is indeed an orthonormal representation, we use that $h\not\adjacentorequal h'$ implies either $\varphi(h)\not\adjacentorequal\varphi(h')$ and therefore $\braket{v_{\varphi(h)}}{v_{\varphi(h')}}=0$, or $\braket{u_h}{u_{h'}}=0$. In both cases
\begin{equation}
\braket{w_h}{w_h'}=\braket{v_{\varphi(h)}}{v_{\varphi(h')}}\lvert\braket{u_h}{u_{h'}}\rvert^2=0.
\end{equation}

For this orthonormal representation and unit vector we have
\begin{equation}
\begin{split}
\lvert\braket{c}{w_h}\rvert^2
 & = \lambda_{\pi(\varphi(h))}\left\lvert\braket{c_{\pi(\varphi(h))}}{v_{\varphi(h)}}\right\rvert^2\frac{1}{\pi(\varphi(h))}\left(\sum_{i=1}^{\pi(\varphi(h))}\lvert\braket{i}{u_h}\rvert^2\right)^2  \\
 & \ge\frac{\lambda_{\pi(\varphi(h))}}{\vartheta(G_{\pi(\varphi(h))})\pi(\varphi(h))}  \\
 & = \frac{1}{\sum_{j=1}^r j\vartheta(G_j)},
\end{split}
\end{equation}
therefore
\begin{equation}
\sum_{d=1}^r\vartheta(H_d)d=\vartheta(H)\le\sum_{d=1}^r d\vartheta(G_d).
\end{equation}
This implies that the extension with exponent $1$ is monotone.
\end{proof}

There are no explicit elements of $\Delta(\ncgraphs,\le)$ known that extend $\vartheta$ with exponent $1$. A possible such extension is the parameter $\hat{\theta}$ introduced by Boreland, Todorov, and Winter \cite{boreland2021sandwich}, which is known to be submultiplicative and monotone \cite[Proposition 5.4. and Proposition 6.1.]{boreland2021sandwich}.

\paragraph{Fractional Haemers bounds.}

In \cite{haemers1979some} Haemers found a new upper bound on the Shannon capacity of graphs, which is incomparable with the Lov\'asz number. A fractional version of Haemers' bound \cite{blasiak2013graph,bukh2018fractional} over any field belongs to the asymptotic spectrum of graphs \cite{zuiddam2019asymptotic}. Of the several alternative formulations we will use the following one, which \cite{bukh2018fractional} attributes to Schrijver. An $a/b$-\emph{subspace representation} of a graph $G$ over a field $\mathbb{F}$ is a collection of subspaces $S_g\subseteq\mathbb{F}^a$ ($g\in V(G)$) such that for all $g\in V(G)$ we have $\dim S_g=b$ and
\begin{equation}
S_g\cap\Big(\sum_{\substack{g'\in V(G)  \\  g'\not\adjacentorequal g}}S_{g'}\Big)=\{0\}.
\end{equation}
The \emph{value} of an $a/b$-subspace representation is the number $\frac{a}{b}$, and the \emph{fractional Haemers bound} $\mathcal{H}^{\mathbb{F}}_f(G)$ is the infimum of the values of all subspace representations of $G$ over $\mathbb{F}$. 

Note that if $(S_g)_{g\in V(G)}$ is a subspace representation and $m\in\positiveintegers$, then $(S_g\otimes\mathbb{F}^m)_{g\in V(G)}$ is also a subspace representation with the same value. In particular, any finite collection of subspace representations can be modified to have equal denominators without affecting their values.

It is known that fields with different nonzero characteristic give rise to distinct parameters \cite[Theorem 2.]{bukh2018fractional}. On the other hand, no separation is known for fields of equal characteristic. In fact, \cite[Proposition 24]{li2020quantum} shows that $\mathcal{H}^\reals_f=\mathcal{H}^\complexes_f$. We do not know the set of admissible exponents for the fractional Haemers bounds over all fields, but we will show that over $\complexes$ every exponent in $[1,\infty)$ is admissible, while over certain fields we can exclude small exponents.

\begin{proposition}\label{prop:fHaemersConeadmissible}
The set of admissible exponents for $\mathcal{H}^\complexes_f$ is $[1,\infty)$.
\end{proposition}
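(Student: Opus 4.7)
The plan is to adapt the argument of \cref{prop:thetaoneadmissible}, replacing orthonormal representations with subspace representations over $\complexes$. By \cref{thm:exponentsupperset} it suffices to verify that $1$ is admissible, so assume $\bigoplus_{d=1}^r \graphnc{H_d} \otimes \idealquantum{d} \le \bigoplus_{d=1}^r \graphnc{G_d} \otimes \idealquantum{d}$. Setting $H := \bigsqcup_{d=1}^r H_d \strongproduct \complement{K_d}$ and using $\idealclassical{d}\le\idealquantum{d}$ yields $\graphnc{H}\le\bigoplus_d\graphnc{G_d}\otimes\idealquantum{d}$, so \cref{lem:specialcohomomorphism} gives a map $\varphi\colon V(H)\to\bigsqcup_d V(G_d)$ together with unit vectors $u_h\in\complexes^{\pi(\varphi(h))}$ satisfying the orthogonality dichotomy whenever $h\not\adjacentorequal h'$. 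Since $\mathcal{H}^\complexes_f\in\Delta(\graphs,\le)$ and $\mathcal{H}^\complexes_f(\complement{K_d})=d$, additivity and multiplicativity give $\mathcal{H}^\complexes_f(H)=\sum_d d\cdot\mathcal{H}^\complexes_f(H_d)$, so the goal reduces to $\mathcal{H}^\complexes_f(H)\le\sum_d d\cdot\mathcal{H}^\complexes_f(G_d)$.

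Fix $\varepsilon>0$ and, for each $d$, pick an $a_d/b$-subspace representation $(S^{(d)}_g)_{g\in V(G_d)}$ of $G_d$ over $\complexes$ with $a_d/b\le\mathcal{H}^\complexes_f(G_d)+\varepsilon$; a common denominator $b$ can be arranged by tensoring with $\complexes^m$'s as noted before the proposition. Inside the ambient space $\bigoplus_d \complexes^{a_d}\otimes\complexes^d$ (of total dimension $\sum_d a_d d$), I would define for each $h\in V(H)$
\[
T_h := S^{(\pi(\varphi(h)))}_{\varphi(h)} \otimes \spannedsubspace\{u_h\},
\]
placed inside the $\pi(\varphi(h))$-th summand. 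Each $T_h$ has dimension $b$, so provided the independence condition holds, $(T_h)_{h\in V(H)}$ is a subspace representation of $H$ of value $\sum_d d\cdot(a_d/b)$; letting $\varepsilon\to 0$ then delivers the desired inequality.

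The key step is verifying $T_h\cap\sum_{h'\not\adjacentorequal h}T_{h'}=\{0\}$. Given an identity $s\otimes u_h=\sum_{h'\not\adjacentorequal h}s_{h'}\otimes u_{h'}$ with $s\in S^{(\pi(\varphi(h)))}_{\varphi(h)}$, only terms with $\pi(\varphi(h'))=\pi(\varphi(h))=:d^*$ survive on the $d^*$-th block, and applying $I\otimes\bra{u_h}$ to the second tensor factor of that block produces
\[
s = \sum_{\substack{h'\not\adjacentorequal h\\\pi(\varphi(h'))=d^*}}\braket{u_h}{u_{h'}}\,s_{h'}.
\]
By the dichotomy in \cref{lem:specialcohomomorphism}, every $h'$ contributing a nonzero coefficient must satisfy $\varphi(h')\not\adjacentorequal\varphi(h)$ in $G_{d^*}$, so $s$ lies in $\sum_{g'\not\adjacentorequal\varphi(h)}S^{(d^*)}_{g'}$, whose intersection with $S^{(d^*)}_{\varphi(h)}$ is trivial by the subspace representation axiom for $G_{d^*}$. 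Hence $s=0$. I expect this block-restriction-then-inner-product bookkeeping to be the only mildly delicate part; the rest is routine once the direct-sum structure of $H$ and the common-denominator trick are in place.
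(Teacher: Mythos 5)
Your proof is correct and follows essentially the same route as the paper's: reduce to the classical graph $H=\bigsqcup_d H_d\strongproduct\complement{K_d}$, invoke \cref{lem:specialcohomomorphism} to get $\varphi$ and the vectors $u_h$, build $T_h=S_{\varphi(h)}\otimes\complexes u_h$ with a common denominator $b$, and verify the independence axiom by restricting to the block $\pi(\varphi(h))$ and pairing against $u_h$. The only cosmetic difference is that you spell out the block restriction explicitly before applying $I\otimes\bra{u_h}$, which the paper leaves implicit; the substance is identical.
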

\begin{proof}
In the same way as in the proof of \cref{prop:thetaoneadmissible}, suppose $\bigoplus_{d=1}^r\graphnc{H_d}\otimes\idealquantum{d}\le\bigoplus_{d=1}^r\graphnc{G_d}\otimes\idealquantum{d}$ and let $H=\bigsqcup_{d=1}^r H_d\strongproduct\complement{K_d}$, and choose a function $\varphi:V(H)\to\bigsqcup_{d=1}^r V(G_d)$ and unit vectors $u_h\in\complexes^{\pi(\varphi(h))}$ such that $h\not\adjacentorequal h'\implies(\varphi(h)\not\adjacentorequal\varphi(h')\text{ or }\braket{u_h}{u_{h'}}=0)$.

For $\epsilon>0$, let $(S_g)_{g\in V(G_d)}$ be $a_d/b$-subspace representations of the graphs $G_d$ with values satisfying $\frac{a_d}{b}\le\mathcal{H}^\complexes_f(G_d)+\epsilon$. We claim that
\begin{equation}
h\mapsto T_h:=S_{\varphi(h)}\otimes\complexes u_h\subseteq\boundeds\left(\bigoplus_{d=1}^r\complexes^{a_d}\otimes\complexes^d\right)
\end{equation}
is a $(\sum_{d=1}^r a_dd)/b$-subspace representation of $H$. Indeed, any element of $T_h$ is of the form $x\otimes u_h$ for some $x\in S_{\varphi(h)}$, therefore a vector in the intersection $T_h\cap\sum_{\substack{h'\in V(H)  \\  h'\not\adjacentorequal h}}T_{h'}$ can be written in two ways as
\begin{equation}
x\otimes u_h=\sum_{\substack{h'\in V(H)  \\  h'\not\adjacentorequal h}}y_{h'}\otimes u_{h'}.
\end{equation}
Apply the map $I_{a_{\pi(\varphi(h))}}\otimes\bra{u_h}$ to get
\begin{equation}
x=\sum_{\substack{h'\in V(H)  \\  \pi(\varphi(h'))=\pi(\varphi(h))  \\  h'\not\adjacentorequal h}}y_{h'}\braket{u_h}{u_{h'}}.
\end{equation}
The conditions $h\not\adjacentorequal h'$ and $\braket{u_h}{u_{h'}}\neq 0$ imply $\varphi(h)\not\adjacentorequal\varphi(h')$, and since $y_{h'}\in S_{\varphi(h')}$ and these subspaces form a subspace-representation of $G_{\pi(\varphi(h))}$, the vector $x$ must be $0$.

It follows that
\begin{equation}
\begin{split}
\mathcal{H}^\complexes_f(H)
 & \le \sum_{d=1}^r \frac{a_d}{b}d  \\
 & \le \sum_{d=1}^r (\mathcal{H}^\complexes_f(G_d)+\epsilon)d  \\
 & = \sum_{d=1}^r \mathcal{H}^\complexes_f(G_d)+\frac{r(r+1)}{2}\epsilon.
\end{split}
\end{equation}
Since $\epsilon>0$ was arbitrary, we also have
\begin{equation}
\sum_{d=1}^r\mathcal{H}^\complexes_f(H_d)d=\mathcal{H}^\complexes_f(H)\le\sum_{d=1}^r\mathcal{H}^\complexes_f(G_d)d
\end{equation}
which expresses the monotonicity of the extension of $\mathcal{H}^\complexes_f$ to $\mathcal{A}$ with exponent $1$.
\end{proof}
There are no known (multiplicative, additive, monotone) extensions of the $\mathcal{H}^\complexes_f$ to the semiring of all noncommutative graphs. It might be possible to ``fractionalize'' the extension of the Haemers bound introduced in \cite{gribling2020haemers} and obtain such an extension with exponent $1$.

We turn to fields of positive characteristic. In \cite{briet2013violating} it was proved that if $p$ is a sufficiently large prime such that there exists a Hadamard matrix of size $4p$, then there exists a graph $G$ such that $\mathcal{H}^{\finitefield{p}}_f(G)<\Theta_*(G)$, the entanglement-assisted zero-error capacity. Li and Zuiddam noted that a strict inequality also holds for the quantum Shannon capacity and, consequently, $\mathcal{H}^{\finitefield{p}}_f$ is not in the asymptotic spectrum of graphs with respect to the quantum cohomomorphism preorder $\le_q$ \cite{li2020quantum}. The proof relied on the following observation: if a graph $G$ satisfies $\xi(G)\le d$ and has $M$ disjoint $d$-cliques, then $\alpha_q(G)\ge M$ \cite[Proposition 7]{briet2013violating}, i.e. $\complement{K_M}\le_q G$. Interestingly, the same assumption has another implication that we will find useful for bounding the admissible exponents for $\mathcal{H}^{\finitefield{p}}_f$.
\begin{lemma}\label{lem:disjointcliques}
Suppose that the graph $G$ satisfies $\xi(G)\le d$ and has $M$ disjoint $d$-cliques. Then $\idealclassical{Md}\le \graphnc{G}\otimes\idealquantum{d}$.
\end{lemma}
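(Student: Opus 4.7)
The plan is to construct an explicit single-Kraus-operator cohomomorphism. Fix an orthogonal representation $(v_g)_{g\in V(G)}$ of $G$ in $\complexes^d$, which exists by the assumption $\xi(G)\le d$. Let $C_1,\dots,C_M\subseteq V(G)$ be the $M$ pairwise disjoint $d$-cliques, and for each $m$ enumerate $C_m=\{g_{m,1},\dots,g_{m,d}\}$. The key observation is that within a single clique $C_m$, all pairs of distinct vertices are adjacent, so the $d$ unit vectors $\{v_{g_{m,k}}\}_{k=1}^d$ are pairwise orthogonal and hence form an orthonormal basis of $\complexes^d$.

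I would define a single linear map $E:\complexes^{Md}\to\complexes^{V(G)}\otimes\complexes^d$ on the basis labelled by pairs $(m,k)\in[M]\times[d]$ by
\begin{equation}
E\ket{(m,k)}=\ket{g_{m,k}}\otimes v_{g_{m,k}}.
\end{equation}
Because the cliques $C_m$ are disjoint, the first tensor factors $\ket{g_{m,k}}$ are pairwise orthogonal across different $(m,k)$, and each $v_{g_{m,k}}$ is a unit vector, so $E^*E=I_{Md}$, i.e.\ the single-operator Kraus condition for a cohomomorphism is satisfied.

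It remains to verify that $E^*(\graphnc{G}\otimes\idealquantum{d})E\subseteq\idealclassical{Md}$. By linearity it suffices to check this on the spanning set $\setbuild{\ketbra{g}{g'}\otimes I_d}{g\simeq g'}$. The $((m,k),(m',k'))$ matrix entry of $E^*(\ketbra{g}{g'}\otimes I_d)E$ equals
\begin{equation}
\braket{g_{m,k}}{g}\braket{g'}{g_{m',k'}}\braket{v_{g_{m,k}}}{v_{g_{m',k'}}}.
\end{equation}
For this to be nonzero one must have $g=g_{m,k}$ and $g'=g_{m',k'}$. In the case $g=g'$ one gets $g_{m,k}=g_{m',k'}$, and disjointness of the cliques forces $(m,k)=(m',k')$, a diagonal entry. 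In the case $g\sim g'$ the orthogonal-representation property gives $\braket{v_g}{v_{g'}}=0$, killing the entry. Thus $E^*(\ketbra{g}{g'}\otimes I_d)E$ is diagonal for every generator, so the whole image lies in the algebra of diagonal operators $\idealclassical{Md}$, completing the cohomomorphism.

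There is no real obstacle here: the only substantive ingredient is the observation that vectors of an orthogonal representation restricted to a $d$-clique must form an orthonormal basis of $\complexes^d$, and then everything fits together on a single Kraus operator. The mild care needed is to track that adjacency of $g,g'$ in $G$ (i.e.\ membership in the spanning set for $\graphnc{G}$) combined with disjointness of the cliques rules out every off-diagonal contribution.
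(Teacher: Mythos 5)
Your proof is correct and in essence identical to the paper's: both apply an orthogonal representation $(v_g)$ of $G$ in $\complexes^d$ (granted by $\xi(G)\le d$) to an enumeration of the $Md$ clique vertices and then write down an explicit cohomomorphism. The differences are purely presentational: the paper uses $Md$ rank-$1$ Kraus operators $E_i=(\ket{\varphi(i)}\otimes u_{\varphi(i)})\bra{i}$ and invokes \cref{lem:specialcohomomorphism} to conclude, while you merge them into the single isometry $E=\sum_i E_i$ and check $E^*E=I$ and $E^*\bigl(\graphnc{G}\otimes\idealquantum{d}\bigr)E\subseteq\idealclassical{Md}$ from scratch; the underlying matrix computations coincide. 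One minor remark: your stated ``key observation'' --- that the $d$ vectors $\{v_{g_{m,k}}\}_{k=1}^d$ within a clique form an orthonormal basis of $\complexes^d$ --- is never actually used. The verification only needs that the $Md$ chosen vertices are pairwise distinct (so the $\ket{g_{m,k}}$ are orthogonal, giving $E^*E=I$ and diagonality in the $g=g'$ case) together with $g\adjacent g'\implies\braket{v_g}{v_{g'}}=0$ from the orthogonal representation; the clique hypothesis enters only through the count $Md$.
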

\begin{proof}
Let $(u_g)_{g\in V(G)}$ be unit vectors in $\complexes^d$ such that $g\adjacent g'\implies\braket{u_g}{u_{g'}}=0$, and let $\varphi:[Md]\to V(G)$ be an enumeration of the vertices in $M$ disjoint $d$-cliques. Then the map $\varphi$ and the vectors $i\mapsto u_{\varphi(i)}$ give rise to a cohomomorphism from $\idealclassical{Md}$ to $\graphnc{G}\otimes\idealquantum{d}$ via \cref{lem:specialcohomomorphism} (with Kraus operators $E_i=(\ket{\varphi(i)}\otimes u_{\varphi(i)})\bra{i}$).
\end{proof}

\begin{proposition}\label{prop:fHaemersFpbound}
Let $p$ be an odd prime such that there exists a Hadamard matrix of size $4p$, and suppose that the exponent $\alpha$ is admissible for $\mathcal{H}^{\finitefield{p}}_f$. Then
\begin{equation}
\begin{split}
\alpha
 & \ge\frac{\log\binom{4p-1}{2p}-\log\sum_{i=0}^{p-1}\binom{4p-1}{i}-\log(4p-1)}{\log(4p-1)}  \\
 & \ge \frac{(4p-1)\left[h(\frac{1}{2}+\frac{1}{8p-2})-h(\frac{1}{4}+\frac{1}{16p-4})\right]-\log(16p^2-4p)}{\log(4p-1)}.
\end{split}
\end{equation}
\end{proposition}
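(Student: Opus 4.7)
My plan is to combine \cref{lem:disjointcliques} with a Hadamard-based graph $G$ and then exploit the monotonicity of the admissible extension $f_\alpha$ for $f=\mathcal{H}^{\finitefield{p}}_f$. Specifically, I would use the construction in \cite{briet2013violating}: starting from the $4p\times 4p$ Hadamard matrix, build a graph $G$ on $\binom{4p-1}{2p}$ vertices (identified with $2p$-subsets of $[4p-1]$, equivalently a distinguished slice of $\{\pm 1\}^{4p}$) whose edges encode orthogonality of the associated $\pm 1$ vectors. The normalized Hadamard rows then witness $\xi(G)\le 4p-1$, and acting by entrywise sign flips produces a vertex partition of $G$ into $M = \binom{4p-1}{2p}/(4p-1)$ cliques of size $d = 4p-1$. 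A Frankl--Wilson style argument over $\finitefield{p}$ (indicator polynomials of the orthogonality relation, reduced modulo $p$) yields a subspace representation with value $(4p-1)\sum_{i=0}^{p-1}\binom{4p-1}{i}$, giving the corresponding upper bound on $\mathcal{H}^{\finitefield{p}}_f(G)$.

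Having set up $G$, \cref{lem:disjointcliques} delivers $\idealclassical{M(4p-1)} = \idealclassical{\binom{4p-1}{2p}} \le \graphnc{G}\otimes\idealquantum{4p-1}$. Since $\alpha$ is admissible, $f_\alpha$ is a monotone semiring-homomorphism that restricts to $\mathcal{H}^{\finitefield{p}}_f$ on graphs and sends $\idealclassical{n}$ to $n$, so monotonicity gives
\[
\binom{4p-1}{2p}\;\le\; \mathcal{H}^{\finitefield{p}}_f(G)\cdot (4p-1)^\alpha\;\le\; (4p-1)^{\alpha+1}\sum_{i=0}^{p-1}\binom{4p-1}{i}.
\]
Taking logarithms and dividing by $\log(4p-1)$ yields the first inequality. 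The second inequality then follows by the standard entropy estimates $\log\binom{n}{k}\ge nh(k/n)-\log(n+1)$ (applied with $n = 4p-1$, $k = 2p$, using $2p/(4p-1) = 1/2 + 1/(8p-2)$) and $\sum_{i=0}^{k}\binom{n}{i}\le 2^{nh(k/n)}$ for $k\le n/2$ (applied with $k = p$, using $p/(4p-1) = 1/4 + 1/(16p-4)$ and the trivial monotonicity of the partial sums in $k$), combined with the identity $\log(4p)+\log(4p-1) = \log(16p^2-4p)$.

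The main obstacle is the first step, i.e., producing the graph $G$ with all three properties simultaneously. The $\xi$-bound and the clique partition are immediate from the Hadamard structure; the fractional Haemers estimate is the technical heart and requires a careful polynomial-method construction over $\finitefield{p}$. In particular, the extra factor $(4p-1)$ in the bound on $\mathcal{H}^{\finitefield{p}}_f(G)$ reflects the per-vertex block dimension used when lifting the Frankl--Wilson dimension count to an $(a/b)$-subspace representation compatible with the Hadamard clique structure. Once this input is in hand, the rest of the argument is a direct monotonicity computation plus elementary entropy bookkeeping.
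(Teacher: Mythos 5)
Your proposal follows essentially the same strategy as the paper: take a Hadamard/Frankl--Wilson type graph $G$, invoke \cref{lem:disjointcliques} to obtain a cohomomorphism from a noiseless classical channel to $\graphnc{G}\otimes\idealquantum{4p-1}$, apply the hypothetical extension $f_\alpha$ with $f=\mathcal{H}^{\finitefield{p}}_f$, and finish with the standard binomial-entropy estimates. This is exactly the paper's route, and the ingredients $\xi(G)\le 4p-1$, the disjoint-clique count, and the polynomial-method bound on $\mathcal{H}^{\finitefield{p}}_f(G)$ all come directly from \cite{briet2013violating}, as the paper cites.

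However, your two intermediate quantitative claims are each off by a factor of $4p-1$, and these two errors happen to cancel. First, you assert a perfect clique partition of $V(G)$ into $M=\binom{4p-1}{2p}/(4p-1)$ cliques of size $4p-1$, obtained by ``entrywise sign flips.'' Sign flips do not preserve the Hamming-weight slice, so this construction does not produce a partition; the result actually used (Lemma~9 of \cite{briet2013violating}) only guarantees at least $\lvert V(G)\rvert/(4p-1)^2$ disjoint $(4p-1)$-cliques, a factor of $4p-1$ fewer. Second, you claim $\mathcal{H}^{\finitefield{p}}_f(G)\le(4p-1)\sum_{i=0}^{p-1}\binom{4p-1}{i}$; the Frankl--Wilson argument of \cite[Lemma 3]{briet2013violating} gives the cleaner bound $\mathcal{H}^{\finitefield{p}}_f(G)\le\sum_{i=0}^{p-1}\binom{4p-1}{i}$, again off by a factor of $4p-1$, and there is no reason for a ``per-vertex block dimension'' correction since the subspace representation there uses one-dimensional spaces per vertex. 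Because these two slips compensate, you reach the correct displayed inequality, but the argument as stated is not sound: one of the two inputs you claim is false. Replace them with the correct pair (clique count $\ge\lvert V(G)\rvert/(4p-1)^2$ and the unadorned polynomial-method Haemers bound) and the rest of your computation goes through verbatim, reproducing the paper's proof.
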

\begin{proof}
Let $G$ be the graph with vertex set the set of binary strings of length $n=4p-1$ and Hamming weight $(n+1)/2$, and with edge set the set of pairs of vertices with Hamming distance $(n+1)/2$. In \cite[Proof of Lemma 3]{briet2013violating} it was shown that (see \cite[Example 11.1.3]{cover2012elements} for the second inequality)
\begin{equation}
\mathcal{H}^{\finitefield{p}}(G)\le\sum_{i=0}^{p-1}\binom{n}{i}\le 2^{nh(p/n)}.
\end{equation}

The graph $G$ satisfies $\xi(G)\le n$ \cite[Lemma 8]{briet2013violating} and has at leas $\lvert V(G)\rvert/n^2$ disjoint $n$-cliques \cite[Lemma 9]{briet2013violating}, where
\begin{equation}
\lvert V(G)\rvert=\binom{n}{(n+1)/2}\ge\frac{1}{n+1}2^{nh(\frac{1}{2}+\frac{1}{2n})}.
\end{equation}
It follows by \cref{lem:disjointcliques} that $\idealclassical{\lceil\lvert V(G)\rvert/n\rceil}\le G\otimes\idealquantum{n}$. We apply the extension of $\mathcal{H}^{\finitefield{p}}_f$ with exponent $\alpha$ to this inequality to get $\lvert V(G)\rvert/n\le \mathcal{H}^{\finitefield{p}}_f(G)n^\alpha$, and rearrange as
\begin{equation}
\begin{split}
\alpha
 & \ge\frac{\log(\lvert V(G)\rvert)}{\log n}-\frac{\log\mathcal{H}^{\finitefield{p}}_f(G)}{\log n}-1  \\
 & \ge\frac{\log\binom{4p-1}{2p}-\log\sum_{i=0}^{p-1}\binom{4p-1}{i}-\log(4p-1)}{\log(4p-1)}  \\
 & \ge \frac{(4p-1)\left[h(\frac{1}{2}+\frac{1}{8p-2})-h(\frac{1}{4}+\frac{1}{16p-4})\right]-\log(16p^2-4p)}{\log(4p-1)}.
\end{split}
\end{equation}
\end{proof}

The bound grows as $\Omega(p/\ln p)$, therefore it becomes nontrivial for any suitable sufficiently large prime $p$. Numerically evaluating the bound for the first few odd primes, we find the first nontrivial lower bound at $p=17$ (a Hadamard matrix of size $68$ exists by the Payley construction \cite{paley1933orthogonal}): if $\alpha$ is an admissible exponent for $\mathcal{H}^{\finitefield{17}}_f$, then $\alpha>1.16249\ldots$.

\paragraph{Projective rank.}

The projective rank $\xi_f$ is a graph parameter introduced in \cite{mancinska2016quantum}, and can be viewed as a fractional version of the orthogonal rank $\xi$. An $a/b$-\emph{projective representation} (with $a\in\naturals$, $b\in\positiveintegers$) of a graph $G$ is a collection $(P_g)_{g\in V(G)}$ of rank-$b$ projections on $\complexes^a$ satisfying $g\adjacent g'\implies P_gP_{g'}=0$. The \emph{value} of an $a/b$-projective representation is the number $\frac{a}{b}$, and the \emph{projective rank} $\xi_f(G)$ is the infimum of the values of all projective representations of $G$. The complement $\complement{\xi_f}$ is an element of $\Delta(\graphs,\le)$ \cite{zuiddam2019asymptotic}.

If $(P_g)_{g\in V(G)}$ is a projective representation and $m\in\positiveintegers$, then $(P_g\otimes I_m)_{g\in V(G)}$ is also a projective representation with the same value. This means that, similarly to subspace representations, a finite collection of projective representations can be modified to have equal denominators without affecting their values.

\begin{proposition}\label{prop:complementprojectiverankoneadmissible}
The set of admissible exponents for $\complement{\xi_f}$ is $[1,\infty)$.
\end{proposition}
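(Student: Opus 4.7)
The plan is to follow the template of \cref{prop:thetaoneadmissible,prop:fHaemersConeadmissible}, with projective representations of complements playing the role that orthonormal representations and subspace representations played there. By \cref{thm:exponentsupperset} it suffices to show that the exponent $1$ is admissible, i.e. that $\bigoplus_{d=1}^r \graphnc{H_d}\otimes\idealquantum{d} \le \bigoplus_{d=1}^r \graphnc{G_d}\otimes\idealquantum{d}$ implies $\sum_d \complement{\xi_f}(H_d)\, d \le \sum_d \complement{\xi_f}(G_d)\, d$.

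Starting from such an inequality, I would form the classical graph $H=\bigsqcup_{d=1}^r H_d\strongproduct\complement{K_d}$ and use $\idealclassical{d}\le\idealquantum{d}$ to obtain $\graphnc{H}\le\bigoplus_d\graphnc{G_d}\otimes\idealquantum{d}$. Then \cref{lem:specialcohomomorphism}, in the special case where the left-hand side is classical, provides a function $\varphi:V(H)\to\bigsqcup_d V(G_d)$ together with unit vectors $u_h\in\complexes^{\pi(\varphi(h))}$ satisfying $h\not\adjacentorequal h'\implies(\varphi(h)\not\adjacentorequal\varphi(h')\text{ or }\braket{u_h}{u_{h'}}=0)$.

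For each $d$ I would fix an $a_d/b$-projective representation $(P^d_g)_{g\in V(G_d)}$ of $\complement{G_d}$ whose value is within $\epsilon$ of $\complement{\xi_f}(G_d)$, using a common denominator $b$ obtained by the padding trick noted just before the statement. Define
\begin{equation}
Q_h = P^{\pi(\varphi(h))}_{\varphi(h)} \otimes \ketbra{u_h}{u_h},
\end{equation}
regarded as a rank-$b$ projection on $\bigoplus_d \complexes^{a_d}\otimes\complexes^d$ supported in the $d=\pi(\varphi(h))$ direct summand. For distinct non-adjacent $h,h'$ in $H$, the product $Q_hQ_{h'}$ vanishes trivially when the summand indices differ, and otherwise equals $\braket{u_h}{u_{h'}}\, P^d_{\varphi(h)}P^d_{\varphi(h')}\otimes\ketbra{u_h}{u_{h'}}$; the cohomomorphism property combined with the defining property of the $(P^d_g)$ forces this to vanish, so $(Q_h)_{h\in V(H)}$ is a projective representation of $\complement{H}$ with value $(\sum_d a_d d)/b$.

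Combining with additivity, multiplicativity, and $\complement{\xi_f}(\complement{K_d})=d$, which together yield $\complement{\xi_f}(H)=\sum_d \complement{\xi_f}(H_d)\,d$, I obtain $\sum_d \complement{\xi_f}(H_d)\,d\le\sum_d(\complement{\xi_f}(G_d)+\epsilon)\,d$, and letting $\epsilon\to 0$ gives the required inequality. I anticipate no significant obstacle here: the argument is a mechanical adaptation of the preceding two proofs, and the only real bookkeeping item is verifying that tensoring a rank-$b$ projection with the rank-$1$ projection $\ketbra{u_h}{u_h}$ preserves the common denominator $b$ needed to read off the value of the representation of $\complement{H}$.
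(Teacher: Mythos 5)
Your proof is correct and takes essentially the same route as the paper's: reduce to exponent $1$ via \cref{thm:exponentsupperset}, apply \cref{lem:specialcohomomorphism} to the classical graph $H=\bigsqcup_d H_d\strongproduct\complement{K_d}$, and tensor the near-optimal projective representations of $\complement{G_d}$ with the rank-one projections $\ketbra{u_h}{u_h}$ to build a rank-$b$ projective representation of $\complement{H}$ of value $(\sum_d a_d d)/b$. The only difference is that you spell out the case-check that $Q_hQ_{h'}=0$, which the paper leaves implicit.
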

\begin{proof}
Once again we proceed as in the proof of \cref{prop:thetaoneadmissible}: suppose $\bigoplus_{d=1}^r\graphnc{H_d}\otimes\idealquantum{d}\le\bigoplus_{d=1}^r\graphnc{G_d}\otimes\idealquantum{d}$ and let $H=\bigsqcup_{d=1}^r H_d\strongproduct\complement{K_d}$, and choose a function $\varphi:V(H)\to\bigsqcup_{d=1}^r V(G_d)$ and unit vectors $u_h\in\complexes^{\pi(\varphi(h))}$ such that $h\not\adjacentorequal h'\implies(\varphi(h)\not\adjacentorequal\varphi(h')\text{ or }\braket{u_h}{u_{h'}}=0)$.

For $\epsilon>0$, let $(P_g)_{g\in V(G_d)}$ be $a_d/b$-projective representations of the graphs $\complement{G_d}$ with values satisfying $\frac{a_d}{b}\le\complement{\xi_f}(G_d)+\epsilon$. Then
\begin{equation}
h\mapsto P_{\varphi(h)}\otimes\ketbra{u_h}{u_h}\in\boundeds\left(\bigoplus_{d=1}^r\complexes^{a_d}\otimes\complexes^d\right)
\end{equation}
is a projective representation of $\complement{H}$ with rank-$b$ projections, therefore
\begin{equation}
\begin{split}
\complement{\xi_f}(H)
 & \le\sum_{d=1}^r\frac{a_d}{b}d  \\
 & \le\sum_{d=1}^r(\complement{\xi_f}(G_d)+\epsilon)d  \\
 & =\sum_{d=1}^r\complement{\xi_f}(G_d)d+\frac{r(r+1)}{2}\epsilon.
\end{split}
\end{equation}
Since $\epsilon>0$ was arbitrary, we also have
\begin{equation}
\sum_{d=1}^r\complement{\xi_f}(H_d)d=\complement{\xi_f}(H)\le\sum_{d=1}^r\complement{\xi_f}(G_d)d,
\end{equation}
i.e. the exponent $1$ is admissible for $\complement{\xi_f}$.
\end{proof}

To the best of our knowledge, no explicit element of $\Delta(\ncgraphs,\le)$ is known which extends $\complement{\xi_f}$. An extension of the orthogonal rank to noncommutative graphs has been considered in \cite{stahlke2015quantum,levene2018complexity}.

\section{Logarithmic convexity}\label{sec:convexity}

Beyond the specific elements considered in the preceding section, the asymptotic spectrum of graphs contains uncountably many points due to the convexity result proved in \cite{vrana2021probabilistic}. In this section we show that, similarly, $\Delta(\mathcal{A},\le)$ is log-convex. The main tool in \cite{vrana2021probabilistic} was the probabilistic refinement of the elements of $\Delta(\graphs,\le)$, which we briefly recall before introducing a variant that is suitable for elements of $\Delta(\mathcal{A},\le)$. For more information on convexity properties of asymptotic spectra we refer the reader to \cite{wigderson2021asymptotic}.

Let $f\in\Delta(\graphs,\le)$, $G$ a nonempty graph and $Q\in\distributions(V(G))$. Let $(Q_n)_{n\in\naturals}$ be any sequence such that $Q_n\in\distributions[n](V(G))$ and $Q_n\to Q$. Then it can be shown that the limit
\begin{equation}
\lim_{n\to\infty}\sqrt[n]{f(\typegraph{G}{n}{Q_n})}
\end{equation}
exists and is independent of the particular sequence. We denote the value of the limit by $f(G,Q)$, and call the resulting functional on probabilistic graphs the \emph{probabilistic refinement} of $f$. The parameter $f$ may be reconstructed from the probabilistic refinement as $f(G)=\max_{Q\in\distributions(V(G))}f(G,Q)$ (for every nonempty $G$). The meaning of the log-convexity of $\Delta(\graphs,\le)$ is that functions $(G,Q)\mapsto\log f(G,Q)$ arising in this way form a convex set with respect to the pointwise operations. This property in turn follows from the fact that the set of logarithmic probabilistic refinements is characterized by a family of affine inequalities.

We define the analogous functionals for elements of $\mathcal{A}$ by reducing it to the probabilistic refinement of the asymptotic spectrum of graphs.
\begin{definition}
Let $S=\bigoplus_{d=1}^r\graphnc{G_d}\otimes\idealquantum{d}$ and $Q\in\distributions(\bigsqcup_{d=1}^r V(G_d))$. We define the probabilistic refinement of $f_\alpha$ (as defined in \cref{def:falpha}) as
\begin{equation}
\log f_\alpha(S,Q):=\entropy(\pi_*(Q))+\sum_{d=1}^r\pi_*(Q)(d)\left[\log f(G_d,Q_d)+\alpha\log d\right],
\end{equation}
where $\entropy$ is the Shannon entropy and $Q_d$ is the normalization of $Q$ restricted to $V(G_d)$ (if $Q(V(G_d))=0$ then $f(G_d,Q_d)$ is not well-defined, but the corresponding term vanishes anyway).
\end{definition}
A routine calculation shows that $f_\alpha(S)=\max_{Q\in\distributions(\bigsqcup_{d=1}^r V(G_d))}f_\alpha(S,Q)$. Note also that forming a convex combinations of the functionals of the form $(S,Q)\mapsto f_\alpha(S,Q)$ amounts to combining the corresponding functionals $(G,Q)\mapsto\log f(G,Q)$ as well as the exponents $\alpha$ with the same weights, since $\log f_\alpha(S,Q)$ is built from these in an affine way.

Recall that every map $f_\alpha:\mathcal{A}\to\nonnegativereals$ is a homomorphism, therefore we only need to prove that monotonicity is preserved under convex combinations. A key point concerning the monotonicity property in \cite{vrana2021probabilistic} is the fact that if the inequality $H\le G$ holds, then for every $Q\in\distributions(V(H))$ there exists a $P\in\distributions(V(G))$ such that for \emph{all} spectral points $f$ the inequality $f(H,Q)\le f(G,P)$ holds (i.e. $P$ can be constructed independently of $f$). Concretely, if $\varphi:\complement{H}\to\complement{G}$ is a homomorphism, then one can take $P=\varphi_*(Q)$. We use a similar argument for the log-convexity of $\Delta(\mathcal{A},\le)$.
\begin{theorem}\label{thm:Aspectrumlogconvex}
The set of functionals of the form $(S,Q)\mapsto \log f_\alpha(S,Q)$ with $f_\alpha\in\Delta(\mathcal{A},\le)$ is convex.
\end{theorem}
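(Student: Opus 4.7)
The plan is to exploit the affine structure of $\log f_\alpha(S, Q)$ in both the family $(\log f(G_d, Q_d))_d$ and the exponent $\alpha$, which will reduce the claim to the log-convexity of $\Delta(\graphs, \le)$ proved in \cite{vrana2021probabilistic} together with a verification of monotonicity. For $f_\alpha, g_\beta \in \Delta(\mathcal{A}, \le)$ and $\lambda \in [0, 1]$, the convex combination $\lambda\log f_\alpha(S, Q) + (1-\lambda)\log g_\beta(S, Q)$ will coincide pointwise with $\log h^\lambda_\gamma(S, Q)$, where $\gamma := \lambda\alpha + (1-\lambda)\beta \ge 1$ and $h^\lambda$ is the element of $\Delta(\graphs, \le)$ with $\log h^\lambda(G, Q) = \lambda\log f(G, Q) + (1-\lambda)\log g(G, Q)$. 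Hence $h^\lambda_\gamma$ will already be a valid semiring homomorphism $\mathcal{A}\to\nonnegativereals$ by construction; only monotonicity remains to be checked.

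For monotonicity, the strategy is to prove a noncommutative analogue of the key observation used in \cite{vrana2021probabilistic}: if $T\le S$ in $\mathcal{A}$ and $Q\in\distributions(\bigsqcup_d V(H_d))$, then there will exist a distribution $P$ on $\bigsqcup_d V(G_d)$, depending only on the cohomomorphism and $Q$, such that $f_\alpha(T, Q)\le f_\alpha(S, P)$ for \emph{every} $f_\alpha\in\Delta(\mathcal{A}, \le)$. Granting this, monotonicity of $h^\lambda_\gamma$ will follow by convex combination: for any $Q$ pick such a $P$ and compute $\log h^\lambda_\gamma(T, Q) = \lambda\log f_\alpha(T, Q) + (1-\lambda)\log g_\beta(T, Q) \le \lambda\log f_\alpha(S, P) + (1-\lambda)\log g_\beta(S, P) = \log h^\lambda_\gamma(S, P) \le \log h^\lambda_\gamma(S)$, and maximizing over $Q$ will give $h^\lambda_\gamma(T)\le h^\lambda_\gamma(S)$.

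To prove the auxiliary statement, the plan is to invoke \cref{lem:specialcohomomorphism} and realize $T\le S$ by a cohomomorphism of the special form $(\varphi, (U_h))$, and then take $P := \varphi_*Q$. For $n$-types $Q_n\to Q$ with $\mu_n := \pi_*Q_n$, I define the refined type component $T_{Q_n} := \graphnc{\complement{K_{\lvert\typeclass{n}{\mu_n}\rvert}}\strongproduct\prod_d\typegraph{H_d}{n\mu_n(d)}{(Q_n)_d}}\otimes\idealquantum{\prod_d d^{n\mu_n(d)}}$ inside $T^{\otimes n}$, and $S_{\varphi_*Q_n}$ analogously. A direct expansion using the definition of the probabilistic refinement for graphs will give $\lim_{n\to\infty}\tfrac{1}{n}\log f_\alpha(T_{Q_n}) = \log f_\alpha(T, Q)$, and similarly $\lim_{n\to\infty}\tfrac{1}{n}\log f_\alpha(S_{\varphi_*Q_n}) = \log f_\alpha(S, P)$. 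The tensorized Kraus operators $E_{\vec h} := \bigotimes_k(\ketbra{\varphi(h_k)}{h_k}\otimes U_{h_k})$, restricted to sequences $\vec h$ of type $Q_n$, will realize a cohomomorphism $T_{Q_n}\le S_{\varphi_*Q_n}$: the completeness relation $\sum_{\vec h}E_{\vec h}^*E_{\vec h} = I$ will hold on the support of $T_{Q_n}$, and the operator-system containment $E_{\vec h}^*S_{\varphi_*Q_n}E_{\vec h'}\subseteq T_{Q_n}$ will reduce coordinatewise to the conditions of \cref{lem:specialcohomomorphism}. Monotonicity of $f_\alpha$ then gives $f_\alpha(T_{Q_n})\le f_\alpha(S_{\varphi_*Q_n})$, and taking $n$-th roots in the limit yields the desired inequality.

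The hard part will be the coordinatewise bookkeeping in this last step: whenever $\varphi^n(\vec h)\adjacentorequal\varphi^n(\vec h')$ componentwise in $S$ and $\bigotimes_k U_{h_k}^*U_{h'_k}\ne 0$, one must verify that each factor $U_{h_k}^*U_{h'_k}$ is a scalar multiple of the identity (so that the tensor product is too) and that $h_k\adjacentorequal h'_k$ for every $k$, placing the resulting operator inside the type-refined component $T_{Q_n}$ rather than merely inside $T^{\otimes n}$. Both conclusions will follow from the contrapositives of the two alternatives in \cref{lem:specialcohomomorphism} applied coordinatewise: the first alternative combined with $\varphi^n(\vec h)\adjacentorequal\varphi^n(\vec h')$ forces $\vec h\adjacentorequal\vec h'$, and the second alternative supplies the proportionality to the identity.
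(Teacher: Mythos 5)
Your proposal is correct and takes essentially the same route as the paper's own proof: reduce to showing $h^\lambda_\gamma$ is monotone, extract $(\varphi,(U_h))$ from \cref{lem:specialcohomomorphism}, push forward the distribution by $\varphi$, realize a cohomomorphism between the type components of $T^{\otimes n}$ and $S^{\otimes n}$ by tensorizing, and pass to the limit. The one technical point the paper flags that you do not mention explicitly is the need to fix isomorphisms between $\complexes^{\pi(h_1)}\otimes\cdots\otimes\complexes^{\pi(h_n)}$ and the single factor $\complexes^{\prod_d d^{n\mu_n(d)}}$ appearing in $T_{Q_n}$; the paper notes this is harmless because the isometries only ever need to be compared when the relevant quantum dimensions coincide coordinatewise, so the same identification is used on both sides.
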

\begin{proof}
Let $f_\alpha,g_\beta\in\Delta(\mathcal{A},\le)$ and $\lambda\in[0,1]$, and define $h\in\Delta(\graphs,\le)$ and $\gamma\in[1,\infty)$ by $\log h(G,P)=\lambda\log f(G,P)+(1-\lambda)\log g(G,P)$ and $\gamma=\lambda\alpha+(1-\lambda)\beta$. We need to show that $h_\gamma$ is monotone.

Let $T=\bigoplus_{d=1}^r\graphnc{H_d}\otimes\idealquantum{d}$ and $S=\bigoplus_{d=1}^r\graphnc{G_d}\otimes\idealquantum{d}$ and suppose that $T\le S$. By \cref{lem:specialcohomomorphism}, $\varphi:\bigsqcup_{d=1}^r V(H_d)\to\bigsqcup_{d=1}^r V(G_d)$ and a family of isometries $U_h:\complexes^{\pi(h)}\to\complexes^{\pi(\varphi(h))}$ ($h\in \bigsqcup_{d=1}^r V(H_d)$) such that $h\not\adjacentorequal h'\implies(\varphi(h)\not\adjacentorequal\varphi(h')\text{ or }U_h^*U_{h'}=0)$ and $(h\adjacentorequal h'\text{ and }\varphi(h)\adjacentorequal\varphi(h'))\implies U_h^*U_{h'}=cI_{\pi(h)}$ for some $c\in\complexes$.

Let $P\in\distributions(\bigsqcup_{d=1}^r V(H_d))$ and consider a sequence $(P_n)_{n\in\positiveintegers}$ such that $P_n\in\distributions[n](\bigsqcup_{d=1}^r V(H_d))$ and $P_n\to P$. Note that if $(h_1,\ldots,h_n)\in\typeclass{n}{P_n}$ then $(\varphi(h_1),\ldots,\varphi(h_n))\in\typeclass{n}{\varphi_*(P_n)}$ and
\begin{equation}
U_{h_1}\otimes\cdots\otimes U_{h_n}:\complexes^{\pi(h_1)}\otimes\cdots\otimes\complexes^{\pi(h_n)}\to\complexes^{\pi(\varphi(h_1))}\otimes\cdots\otimes\complexes^{\pi(\varphi(h_n))},
\end{equation}
therefore the restrictions $\left.\varphi^{\times n}\right|_{\typeclass{n}{P_n}}$ and $(U_{h_1}\otimes\cdots\otimes U_{h_n})_{(h_1,\ldots,h_n)\in\typeclass{n}{P_n}}$ determine (via \cref{lem:specialcohomomorphism}) a cohomomorphism corresponding to the inequality
\begin{multline}
\idealclassical{\lvert\typeclass{n}{\pi_*(P_n)}\rvert}\otimes\bigotimes_{d=1}^r\graphnc{\typegraph{G_d}{n\pi_*(P_n)(d)}{(P_n)_d}}\otimes\idealquantum{\prod_{d=1}^rd^{n\pi_*(P_n)(d)}}  \\
\le \idealclassical{\lvert\typeclass{n}{\pi_*(\varphi_*(P_n))}\rvert}\otimes\bigotimes_{d=1}^r\graphnc{\typegraph{H_d}{n\pi_*(\varphi_*(P_n))(d)}{(\varphi_*(P_n))_d}}\otimes\idealquantum{\prod_{d=1}^rd^{n\pi_*(\varphi_*(P_n))(d)}}.
\end{multline}
More precisely, for every element $(d_1,\ldots,d_n)\in\typeclass{n}{\pi_*(P_n)}$ (corresponding to the vertices of the first factor on the left hand side) one needs to choose an isomorphism between $\complexes^{\pi(h_1)}\otimes\cdots\otimes\complexes^{\pi(h_n)}$ and $\complexes^{\prod_{d=1}^rd^{n\pi_*(P_n)(d)}}$ and similarly on the right hand side. Note that the isometries only need to be compared when $\varphi(h_i)\adjacentorequal\varphi(h'_i)$ for all $i$ (and, in the second condition, assuming further $h_i\adjacentorequal h'_i$ for all $i$), and these imply $\pi(\varphi(h_i))=\pi(\varphi(h'_i))$ (respectively, $\pi(h_i)=\pi(h'_i)$), therefore the same isomorphism is used for both and thus the choice of the isomorphisms does not matter.

We apply monotonicity of $f_\alpha$ and $g_\beta$ to this inequality, take the logarithm, divide by $n$ and let $n\to\infty$ to get
\begin{align}
\log f_\alpha(T,P)\le \log f_\alpha(S,\varphi_*(P))  \\
\log g_\beta(T,P)\le \log g_\beta(S,\varphi_*(P)),
\intertext{which implies}
\log h_\gamma(T,P)\le \log h_\gamma(S,\varphi_*(P))
\end{align}
by taking the convex combination of the two inequalities. The right hand side is at most $\log h_\gamma(S)$, and maximizing the left hand side over $P$ gives $\log h_\gamma(T)$, therefore $h_\gamma(T)\le h_\gamma(S)$ as required.
\end{proof}

\section{Comments}

\begin{enumerate}
\item $\Delta(\mathcal{A},\le)$ can be identified with the set of pairs $(f,\alpha)\in\Delta(\graphs,\le)\times\reals$ where $\alpha$ is an admissible exponent for $f$, which is the epigraph of an extended real-valued function by \cref{thm:exponentsupperset}. This function is convex by \cref{thm:Aspectrumlogconvex} and lower semicontinuous because $\Delta(\mathcal{A},\le)$ is locally compact \cite{vrana2021generalization}.
\item It is not known if $\Delta(\ncgraphs,\le)$ characterizes the asymptotic preorder between noncommutative graphs similarly to \cref{thm:Strassen}, but it can be shown that $\Delta(\mathcal{A},\le)$ does characterize the asymptotic preorder between elements of $\mathcal{A}$. To see this, we verify the conditions of \cite[Corollary 2]{vrana2021generalization}: if $S,T\in\mathcal{A}$ and $\frac{\ev_T}{\ev_S}$ is bounded, then (specializing to some $f_\alpha$, $\alpha\to\infty$) one can see that the largest $d$ such that $\idealquantum{d}$ appears in $S$ is at least as large as those appearing in $T$; therefore $T\le \graphnc{H}\otimes\idealquantum{d}$ and  $\graphnc{G}\otimes\idealquantum{d}\le S$ for some graphs $H$ and $G$; since the preorder on $\graphs$ is Strassen, we have $H\le\complement{K_r}\strongproduct G$ for some $r\in\naturals$, which implies $T\le\idealclassical{r}\otimes S$.
\item We note that the asymptotic preorder of $\ncgraphs$ (or $\mathcal{A}$), restricted to the subsemiring $\graphs$ may not be the same as the asymptotic preorder of $\graphs$. The latter allows a sublinear number of uses of a noiseless classical channel, while the former allows a sublinear number of uses of a noiseless quantum channel when comparing large powers. \item It is a curious fact that we were able to show that $1$ is an admissible exponent precisely for those $f\in\Delta(\graphs,\le)$ which are known to be elements of $\Delta(\graphs,\le_q)$ and, conversely, we were able to show that $1$ is not admissible for those $f$ which are known to be outside $\Delta(\graphs,\le_q)$ \cite{li2020quantum} (and the violated inequalities involve the same graphs). We do not know if this is merely a coincidence or the two questions are related.
\item The statement that no exponent is admissible for $\complement{\xi}_f$ can be considerably generalized in connection with the log-convex structure. Note that the graphs $\complement{\Omega_n}$ are vertex-transitive, therefore their logarithmic evaluation maps are affine on $\Delta(\graphs,\le)$. Let $f\in\Delta(\graphs,\le)$ be arbitrary and $\lambda\in(0,1]$, and let $g$ be the convex combination of $f$ (with weight $\lambda$) and $\complement{\chi_f}$ (with weight $1-\lambda$). Since $f(\complement{\Omega_{4k}})\ge 2$ (there exist distinct non-adjacent vertices), we have $\log g(\complement{\Omega_{4k}})\ge(1-\lambda)4k\log\frac{2}{2-\epsilon}$, which implies (using an estimate similar to \eqref{eq:fccalphalowerbound}) that no exponent for $g$ is admissible.
\item The following problem arises from the results of \cref{prop:thetaoneadmissible,prop:fHaemersConeadmissible,prop:complementprojectiverankoneadmissible}:  construct explicit extensions of $\vartheta,\mathcal{H}^\complexes_f$ and $\complement{\xi_f}$ with exponent $1$ or, even better, \emph{families} of extensions for the entire parameter range $[1,\infty)$.
\end{enumerate}

\section*{Acknowledgement}

This work was partially funded by the National Research, Development and Innovation Office of Hungary via the research grants K124152, KH129601, by the \'UNKP-21-5 New National Excellence Program of the Ministry for Innovation and Technology, the J\'anos Bolyai Research Scholarship of the Hungarian Academy of Sciences, and by the Ministry of Innovation and Technology and the National Research, Development and Innovation Office within the Quantum Information National Laboratory of Hungary.

\bibliography{refs}{}

\end{document}